\newcommand{\beq}{\begin{equation}}
\newcommand{\eeq}{\end{equation}}
\newcommand{\be}{\begin{equation}}
\newcommand{\ee}{\end{equation}}
\newcommand{\eps}{\epsilon}
\newcommand{\bi}{\begin{itemize}}
\newcommand{\ei}{\end{itemize}}
\newcommand{\calA}{\mathcal{A}}
\newcommand{\calC}{\mathcal{C}}
\newcommand{\calD}{\mathcal{D}}
\newcommand{\calE}{\mathcal{E}}
\newcommand{\calM}{\mathcal{M}}
\newcommand{\calN}{\mathcal{N}}
\newcommand{\calX}{\mathcal{X}}
\newcommand{\bbE}{\mathbb{E}}
\newcommand{\bbP}{\mathbb{P}}
\DeclareMathAlphabet{\mathbsf}{OT1}{cmss}{bx}{n}
\DeclareMathAlphabet{\mathssf}{OT1}{cmss}{m}{sl}
\DeclareSymbolFont{bsfletters}{OT1}{cmss}{bx}{n}  
\DeclareSymbolFont{ssfletters}{OT1}{cmss}{m}{n}
\DeclareMathSymbol{\bsfGamma}{0}{bsfletters}{'000}
\DeclareMathSymbol{\ssfGamma}{0}{ssfletters}{'000}
\DeclareMathSymbol{\bsfDelta}{0}{bsfletters}{'001}
\DeclareMathSymbol{\ssfDelta}{0}{ssfletters}{'001}
\DeclareMathSymbol{\bsfTheta}{0}{bsfletters}{'002}
\DeclareMathSymbol{\ssfTheta}{0}{ssfletters}{'002}
\DeclareMathSymbol{\bsfLambda}{0}{bsfletters}{'003}
\DeclareMathSymbol{\ssfLambda}{0}{ssfletters}{'003}
\DeclareMathSymbol{\bsfXi}{0}{bsfletters}{'004}
\DeclareMathSymbol{\ssfXi}{0}{ssfletters}{'004}
\DeclareMathSymbol{\bsfPi}{0}{bsfletters}{'005}
\DeclareMathSymbol{\ssfPi}{0}{ssfletters}{'005}
\DeclareMathSymbol{\bsfSigma}{0}{bsfletters}{'006}
\DeclareMathSymbol{\ssfSigma}{0}{ssfletters}{'006}
\DeclareMathSymbol{\bsfUpsilon}{0}{bsfletters}{'007}
\DeclareMathSymbol{\ssfUpsilon}{0}{ssfletters}{'007}
\DeclareMathSymbol{\bsfPhi}{0}{bsfletters}{'010}
\DeclareMathSymbol{\ssfPhi}{0}{ssfletters}{'010}
\DeclareMathSymbol{\bsfPsi}{0}{bsfletters}{'011}
\DeclareMathSymbol{\ssfPsi}{0}{ssfletters}{'011}
\DeclareMathSymbol{\bsfOmega}{0}{bsfletters}{'012}
\DeclareMathSymbol{\ssfOmega}{0}{ssfletters}{'012}
\newcommand{\tilL}{\tilde{L}}
\newcommand{\barf}{\bar{f}}
\newcommand{\veps}{\varepsilon}
\DeclareMathOperator{\var}{Var}
\newtheorem{theorem}{Theorem}
\newtheorem{lemma}[theorem]{Lemma}
\newtheorem{proposition}[theorem]{Proposition}
\newtheorem{definition}{Definition} 
\newtheorem{assumption}{Assumption}
\newtheorem{remark}{Remark}
\newcommand{\qednew}{\nobreak \ifvmode \relax \else
      \ifdim\lastskip<1.5em \hskip-\lastskip
      \hskip1.5em plus0em minus0.5em \fi \nobreak
      \vrule height0.75em width0.5em depth0.25em\fi}
\newcommand{\KL}{\textup{KL}}
\newcommand{\V}{\textup{V}}
\newcommand{\uP}{\textup{P}}
\newcommand{\E}{\mathsf{E}}
\newcommand{\sE}{\mathsf{E}}
\newcommand{\BN}{\mathbb{N}}
\newcommand{\BP}{\mathbb{P}}
\newcommand{\BE}{\mathbb{E}}
\newcommand{\BR}{\mathbb{R}}
\newcommand{\dCL}[1]{\delta_{#1, \, \textup{SP-CLT}}}
\newcommand{\dSDa}[1]{\delta_{#1, \, \textup{SP-MSD}}^{(0)}}
\newcommand{\dSDb}[1]{\delta_{#1, \, \textup{SP-MSD}}^{(1)}}
\newcommand{\sdp}{saddle-point}
\newcommand{\Sdp}{Saddle-point}
\renewcommand{\eps}{\varepsilon}
\renewcommand{\epsilon}{\varepsilon}
\newcommand{\errt}{\mathrm{err}_{\mathrm{SP}}}
\DeclareMathOperator{\esup}{ess\,sup}
\newcommand{\define}{\triangleq}
\newcommand{\error}{\textup{error}}
\title{The Saddle-Point Accountant for Differential Privacy }
\author{%
Wael Alghamdi\thanks{Corresponding author, remaining authors in alphabetical order.}~\thanks{W.\hspace{-0.5pt} Alghamdi, F.\hspace{-0.5pt} P.\hspace{-0.5pt} Calmon, and J. F. Gomez are with the School of Engineering and Applied Science, Harvard University (emails: {alghamdi@g.harvard.edu, flavio@seas.harvard.edu, juangomez@g.harvard.edu})}~,
Shahab Asoodeh\thanks{S.\hspace{-0.5pt} Asoodeh is with the Department of Computing and Software,\hspace{-1pt} McMaster\hspace{-0.5pt} University\hspace{-1.5pt} (email:\hspace{-3pt} {asoodehs@mcmaster.ca})}~,
Flavio P. Calmon${}^\dagger$, Juan Felipe Gomez${}^\dagger$\\
Oliver Kosut\thanks{O.\hspace{-0.5pt} Kosut, L.\hspace{-0.5pt} Sankar, and F.\hspace{-0.5pt} Wei are with the School of Electrical, Computer, and Energy Engineering, Arizona State University (emails: {\{\hbox{okosut},lsankar,fwei16\}@asu.edu})}~,
Lalitha Sankar${}^\S$, and
Fei Wei${}^\S$
}
\date{}
\begin{document}

\maketitle

\begin{abstract}
    We introduce a new  differential privacy (DP) accountant called \emph{the saddle-point accountant} (SPA).  SPA approximates  privacy guarantees for the composition of DP mechanisms in an accurate and fast manner. Our approach is inspired by the saddle-point method---a ubiquitous numerical  technique in statistics. We prove rigorous performance guarantees by deriving upper and lower bounds for the approximation error offered by SPA. The crux of SPA is a combination of large-deviation methods with central limit theorems, which we derive via exponentially tilting the privacy loss random variables corresponding to the DP mechanisms. One key advantage of SPA is that it runs in constant time for the $n$-fold composition of a privacy mechanism.  Numerical experiments demonstrate that SPA achieves comparable accuracy to state-of-the-art accounting methods with a faster runtime.
\end{abstract}

\section{Introduction}

Differential privacy (DP) is a widely adopted standard for privacy-preserving machine learning (ML). Differentially private mechanisms used in ML tasks typically operate in the \emph{large-composition regime}, where  mechanisms are sequentially applied  many times to sensitive data.   For example, when training neural networks using stochastic gradient descent, DP can be ensured by clipping and adding Gaussian noise to each gradient update \citep{Abadi_MomentAccountant}. Here, a DP mechanism (gradient clipping plus noise) is applied hundreds of times to training data---even for simple datasets such as MNIST~\cite{LeCun} and CIFAR-10~\cite{CIFAR10}.  

Quantifying the privacy loss after a large number of compositions of DP mechanisms is a central challenge in privacy-preserving ML. This challenge has motivated the introduction of several new \emph{privacy accounting} techniques that aim to measure how privacy degrades with the number of queries computed over sensitive data (e.g., gradient updates). Examples of privacy accounting techniques include the moments accountant \citep{Abadi_MomentAccountant,RenyiDP}, convolution and FFT-based methods~\cite{koskela2020computing,koskela2021computing,koskela21a_FFT,Numerical_compositionDP, FasterPrivacyAccountant}, linearization of the underlying divergence measure~\cite{ConnectingDots}, characteristic function~\cite{zhu2022optimal},  and advanced composition formulas~\cite{dwork2010boosting,Dwork_Roth_book} (see Section~\ref{sec:setup} for a more detailed discussion on the related work). Most privacy accountants provide a bound on the $\eps$ and $\delta$ parameters of a variant of DP called \emph{approximate DP}, defined next. Throughout, we let $\calM(D)$ denote the output random variable of a randomized mechanism $\calM$ running on the dataset $D$.

\begin{definition}[$(\veps,\delta)$-DP~\cite{Dwork_Calibration,Dwork-OurData}] \label{def:DP}A mechanism (i.e., randomized algorithm) $\calM$ is \emph{$(\varepsilon,\delta)$-differentially private} (DP) if, for every pair of neighboring datasets $D$ and $D'$ and event $E$, we have $\BP\left[ \calM(D)\in E \right] \le e^{\varepsilon} \ \BP\left[ \calM(D') \in E \right] + \delta$.
\end{definition}

Any privacy mechanism $\calM$ induces a collection of $(\eps, \delta)$-DP guarantees; in other words, for each value of $\eps\geq 0$, there exists the best (i.e., smallest) achievable $\delta$ such that the mechanism is $(\eps, \delta)$-DP. Since such a $\delta$ depends on $\eps$, we write it as $\delta_\calM(\eps)$ and call the function $\delta_{\calM}: \BR_+ \to [0,1]$ the \textit{optimal privacy curve} of $\calM$. A naive approach for characterizing the privacy curve  $\delta_\calM(\eps)$ for a given mechanism is to invoke Definition~\ref{def:DP} and directly compute, for a fixed $\eps$, the corresponding $\delta$ for each pair of neighboring datasets $D$ and $D'$ then maximizing over all such values of $\delta$ as $(D,D')$ vary.

The  \textit{adaptive} composition of two mechanisms $\calM_1$ and $\calM_2$ is given by the mechanism 
\begin{equation}
\calM_1\circ \calM_2(D) \define
(\calM_1(D), \calM_2(D, \calM_1(D))),
\end{equation} 
that is, $\calM_2$ can look at both the dataset and the output of $\calM_1$. Let $\calM^{\circ n}$ denote the adaptive composition of the same mechanism $\calM$ for $n$ times. Characterizing tight bounds for  the privacy curve $\delta_{\calM^{\circ n}}(\epsilon)$ of $\calM^{\circ n}$ is a central question in DP literature~\cite{Dwork-OurData, dwork2010boosting, Kairouz_Composition_TIT, kairouz_Composition, sommer2019privacy, koskela2020computing,koskela2021computing,koskela21a_FFT,Numerical_compositionDP, FasterPrivacyAccountant, zhu2022optimal, ConnectingDots}.

We develop a framework, called \emph{{\sdp} accountant} (SPA), for accurately and efficiently approximating $\delta_{\calM^{\circ n}}$ using the \textit{{\sdp}} technique---a well-known technique used in statistics to approximate tail probabilities of complicated distributions, see, e.g., \cite{reid_saddlepoint_1988,daniels_tail_1987,small_Sample} for an overview. When approximating statistics involving estimates of tail probabilities, the {\sdp} method is more precise than the central limit theorem (CLT) and Edgeworth expansion-based approaches~\citep{kolassa_series_2006}, and is one of the go-to numerical techniques in applied statistics---yet, until now, it has not been adopted in DP. Informally speaking, SPA combines the large deviation behavior of the moments accountant \cite{Abadi_MomentAccountant} with the CLT approach of \cite{sommer2019privacy,GaussianDP}, to derive an accountant that dramatically outperforms both, with only marginal more computational cost.

The SPA uses the effective concept of a \emph{privacy loss random variable} (PLRV) (see, e.g.,~\cite{koskela2020computing,koskela2021computing,koskela21a_FFT,Numerical_compositionDP, FasterPrivacyAccountant, zhu2022optimal}). To this end, one begins by first identifying a dominating pair of distributions $(P, Q)$ for the mechanism $\calM$:
\begin{equation} \label{eq:P,Q}
    \BP\left[ \calM(D)\in E \right] - e^{\varepsilon} \ \BP\left[ \calM(D') \in E \right] \leq P(E) - e^\eps Q(E)
\end{equation} for any event $E$ and pair of neighboring datasets $D$ and $D'$ (see Definition~\ref{def:dominating} for more details). Such (not necessarily unique) pair exists for all mechanisms  \cite{zhu2022optimal} and provides an alternative way for computing  $\delta_{\calM}(\eps)$ in terms of a PLRV. More precisely, it can be shown that $\calM$ is $(\eps, \delta_L(\eps))$-DP with \begin{equation} \label{eq:delta L}
\delta_L(\eps) \define \mathbb E\left[ \left(1-e^{-(L-\eps)} \right)^+\right],
\end{equation}
where $a^+ \define \max\{0, a\}$ and
\begin{equation}
L \define \log\frac{\text{d} P}{\text{d} Q}(X),
\end{equation}
with $X\sim P$. Here, $L$ is called a PLRV, and we call $\delta_L$ as given by~\eqref{eq:delta L} its \emph{induced privacy curve}. That $\calM$ is $(\eps, \delta_L(\eps))$-DP can be written mathematically as $\delta_{\calM}\le \delta_L$. Further, if $(P,Q)$ is tightly dominating in~\eqref{eq:P,Q} (see Definition~\ref{def:dominating}), then the induced privacy curve of $L$ coincides with the optimal privacy curve for $\calM$, i.e., we have $\delta_\calM = \delta_L$.

Privacy accounting is particularly important in the composition setting. In this setting, we can form a PLRV for the composed mechanism that splits additively. In other words, $L^{(n)} \define L_1+\cdots+L_n$, where $L_1,\cdots,L_n$ are i.i.d., is a PLRV for the $n$-fold composition $\calM^{\circ n}$. The induced privacy curve $\delta_{L^{(n)}}$ gives then a privacy guarantee for $\calM^{\circ n}$, i.e., $\delta_{\calM^{\circ n}}\le \delta_{L^{(n)}}$. Furthermore, this inequality is tight in general~\cite{GaussianDP}: there exists a mechanism $\calM$ for which the optimal privacy curve of the composition $\calM^{\circ n}$ is equal to $\delta_{L^{(n)}}$, i.e., $\delta_{\calM^{\circ n}} = \delta_{L^{(n)}}$. For these reasons, we focus on approximating the curve $\delta_{L^{(n)}}$, which we call the \emph{composition curve} (see Definition~\ref{Def:CompositionCurve}). It is important to note that all existing accounting methods provide bounds on the composition curve, rather than the optimal privacy curve $\delta_{\calM^{\circ n}}$ itself.

To justify the use of the {\sdp} method in approximating the privacy curve $\delta_{L^{(n)}}$, we show that $\delta_{L^{(n)}}$ can be expressed in terms of an integral on the complex plane by means of the Fourier transform. We then provide two different approximations for this integral that  involve the cumulant generating function (CGF)  
\begin{equation}
    K_L(t) \define \log \mathbb E\left[e^{tL}\right]
\end{equation}
of the PLRV corresponding to the mechanism $\calM$. Both approximations require a judicious choice of the integration path in the complex plane: integration is done over the line parallel to the imaginary axis with real part corresponding to the \emph{{\sdp}} of the integrand. The {\sdp}  is given by the unique solution of 
\begin{equation}\label{Saddle_point}
    \boxed{n K'_L(t) = \eps + \frac{1}{t} + \frac{1}{t+1}.}
\end{equation}

The first approximation is based on expanding the integrand around the {\sdp}  via Taylor expansion (see Section~\ref{sec:saddlepoint}). In Mathematical Physics, this corresponds to the well-known \emph{method of steepest descent} (MSD)~\cite{mathematical_physics_jeffrey}. This approach leads to  the following approximation: 
\begin{equation}\label{SPA_approx1}
    \boxed{\delta_{L^{(n)}}(\eps) \approx \dSDa{L^{(n)}}(\eps) \define  \frac{e^{nK_L(t) - \eps t}}{\sqrt{2\pi [nt^2(1+t)^2 K''_L(t) + t^2 + (1+t)^2]}}, }
\end{equation}
where $t$ is the unique solution of \eqref{Saddle_point}. We also derive a more accurate approximation based on the same approach that includes additional terms, denoted $\delta^{(1)}_{L^{(n)},\text{SP-MSD}}(\eps)$ (see Sec.~\ref{sec:steepest_descent} for details).

The second approximation involves recentering the density of $L$ by introducing its exponentially tilted version $\tilde L$ and then applying CLT to $\tilde L$ (see Section~\ref{sec:CLT}). As discussed in Section~\ref{sec:exp_tilting}, tilting the distribution is equivalent to selecting the integration path in the complex plane. By choosing the tilting parameter $t$ that satisfies \eqref{Saddle_point}, this approach leads to the following approximation:
\begin{equation}\label{SPA_approx2}
    \boxed{\delta_{L^{(n)}}(\eps) \approx \dCL{L^{(n)}}(\eps)
\define e^{nK_L(t)-\eps t} ~\bbE\Big[ e^{-t(Z-\varepsilon)}\big(1-e^{-(Z-\varepsilon)}\big)^+\Big],}
\end{equation}
where $Z$ is a Gaussian random variable with mean $nK'_L(t)$ and variance $nK''_L(t)$. Notice that the expectation in \eqref{SPA_approx2} can be analytically computed. 
While the first approximation can exhibit a better accuracy in the numerical experiments in Section~\ref{sec:numerical}, the second one is more amenable toward the error analysis given in Section~\ref{sec:BE}. 
Both approximations hold even if we consider composition of $n$ different mechanisms $\calM_1, \dots, \calM_n$, except that in this case we replace $nK_L(t)$ with $\sum_{i=1}^n K_{L_i}(t)$, where $L_i$ is a PLRV corresponding to $\calM_i$.

The proposed framework has the following main features: 
\begin{enumerate}
\item \textbf{Precise DP accounting:}
Both $\dSDa{L^{(n)}}(\eps)$ and $\dCL{L^{(n)}}(\eps)$ provide similar precision to the state-of-the-art accountants. To illustrate this, in Figure~\ref{fig:DP-SGD-exp} we compare both approximations with the moments accountant~\cite{Abadi_MomentAccountant}, CLT-based accountant for Gaussian DP (GDP)~\cite{GaussianDP}, and the accountant proposed by Gopi et al.~\cite{Numerical_compositionDP} that is known to be the state-of-the-art. This figure illustrates that our framework significantly outperforms both the moments accountant and GDP (which underreports the privacy curve); it is also comparable with the state-of-the-art accountant of~\cite{Numerical_compositionDP}. In addition, a remarkable advantage of our two approximations over the state-of-the-art is that it provides a precise estimate for $\delta_{L^{(n)}}(\eps)$ consistently for all values of parameters, whereas the accountant by Gopi et al.~\cite{Numerical_compositionDP} suffers from floating point inaccuracies when the true value of $\delta$ is small (see Figure~\ref{fig:dve}).\footnote{See Appendix B in \cite{Numerical_compositionDP} and their GitHub \cite{PRV_break} for further discussion on the limitations of the implementation.}

\item \textbf{Lower runtime complexity compared to FFT-based accountants:} The computational complexity for computing both $\dSDa{L^{(n)}}(\eps)$ and $\dCL{L^{(n)}}(\eps)$ is \textit{independent of the number of compositions} for the $n$-fold composition setting. This is in stark contrast with the state-of-the-art~\cite{Numerical_compositionDP} whose complexity is\footnote{We use standard asymptotic notation: $f=O(g)$ means $\limsup_{n\to \infty} |f(n)|/g(n) < \infty$, $f=o(g)$ or $f\ll g$ means $\lim_{n\to \infty} f(n)/g(n) = 0$, and $f \sim g$ means $\lim_{n\to \infty} f(n)/g(n) = 1$.} $O(\sqrt{n} \log n)$ and a recent followup work which achieved $O(\text{polylog} \: n)$~\cite{FasterPrivacyAccountant}.

\item \textbf{The best of both worlds---combining large deviation and central limit approaches:} There are two main approaches to approximating or bounding expectations of sums of independent random variables. The first is the large deviation approach, which uses the moment generating function to approximate the probability of very unlikely events. The second is the central limit theorem (CLT), which approximates a random variable by a Gaussian with the same mean and variance. For DP accounting, the large deviation approach led to the moments accountant \cite{Abadi_MomentAccountant}; the CLT approach led to Gaussian DP \cite{sommer2019privacy,GaussianDP}. Both these accountant methods can be computed in constant time, but their accuracy is far less than the state-of-the-art FFT accountants. The {\sdp} method can be viewed as a combination of two basic approaches: maintaining from large deviations the ability to handle very small values of $\delta$, as well as the precise guarantees of the CLT. The resulting accountant achieves better accuracy than either approach on its own, while maintaining constant runtime.
\end{enumerate}

\begin{figure}[ht]
    \centering
    \begin{minipage}{0.5\textwidth}
        \centering
        \includegraphics[width=1\textwidth]{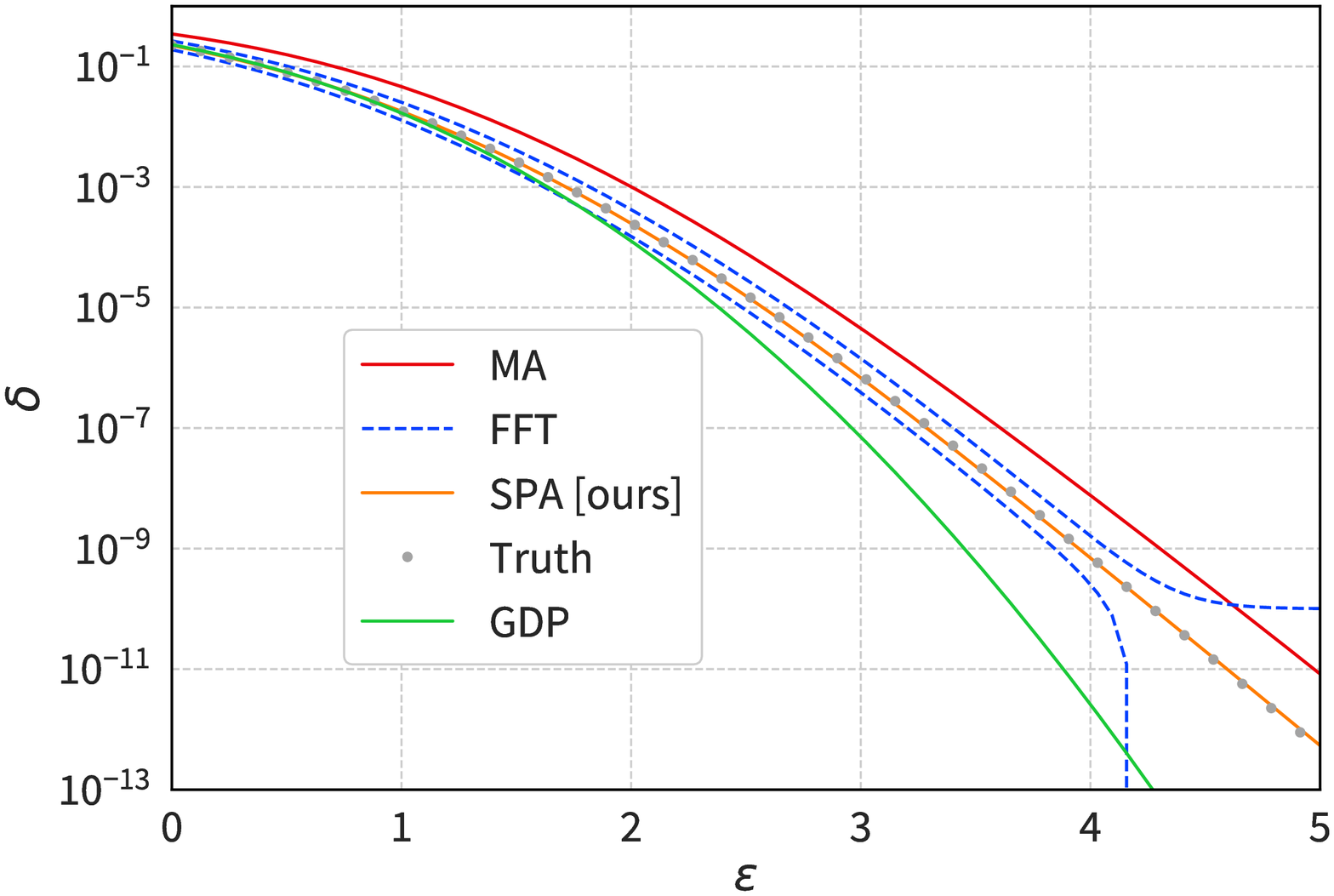}
        \captionsetup{width=0.9\textwidth}
        \caption*{(a) 
        }
    \end{minipage}\hfill
    \begin{minipage}{0.5\textwidth}
        \centering
        \includegraphics[width=1\textwidth]{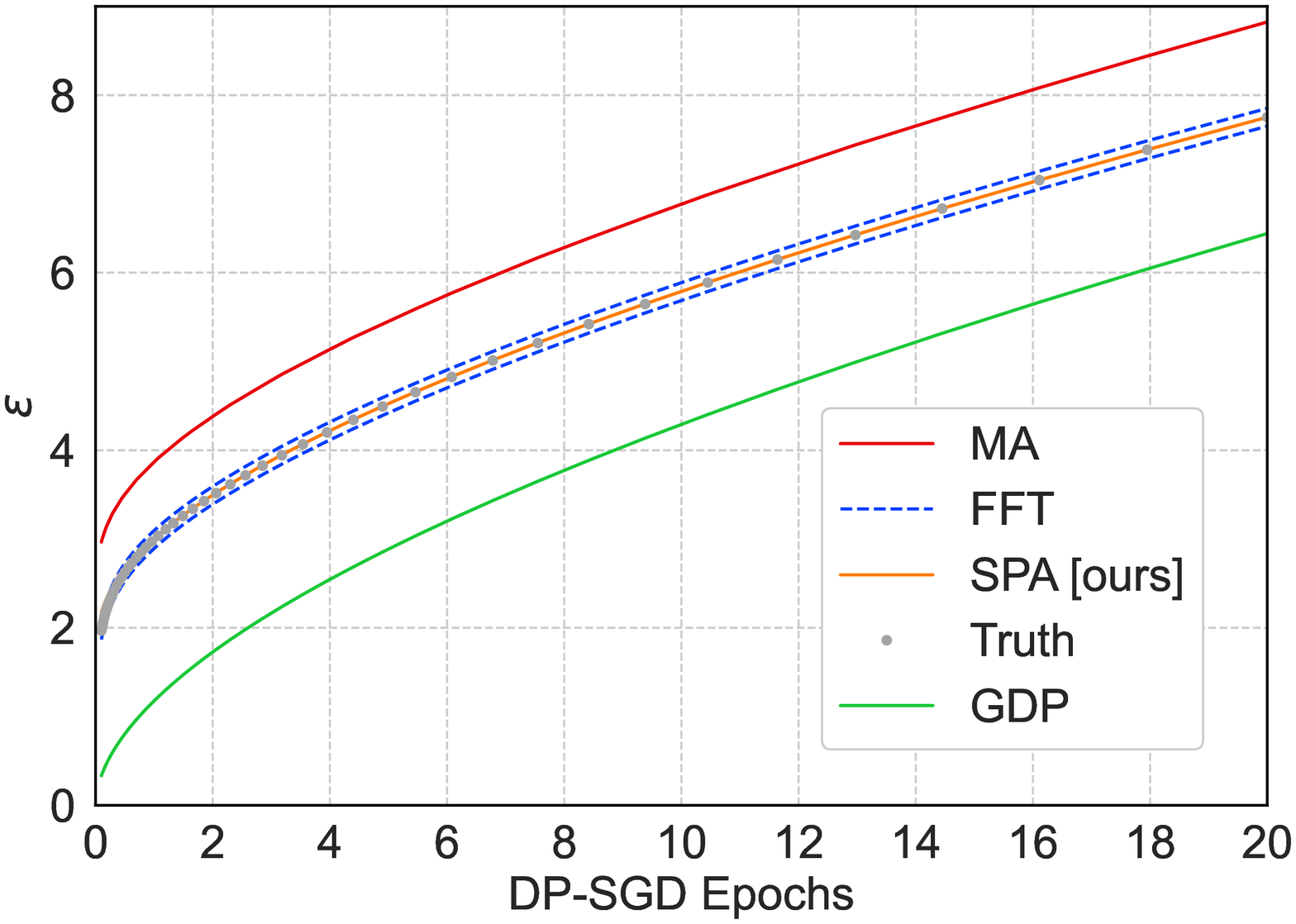} 
        \captionsetup{width=0.9\textwidth}
        \caption*{(b)
        }
    \end{minipage}
    \caption{Pane (a) shows the subsampled Gaussian privacy curve after 2000 compositions with noise $\sigma = 1$, and subsampling rate $\lambda = 10^{-2}$. We plot three approximations to $\delta_{L^{(n)}}$ given by \eqref{SPA_approx1}, \eqref{SPA_approx2}, and $\delta^{(1)}_{L^{(n)},\text{SP-MSD}}(\eps)$ (defined in  Section~\ref{sec:steepest_descent}). The three approximations are labelled ``SPA [ours]'' since they are indistinguishable.   Pane~(b) shows a simulation of DP-SGD with noise $\sigma = 0.65$, subsampling rate $\lambda = 10^{-2}$, and privacy parameter $\delta = 10^{-5}$. We again plot all three approximations to  $\delta_{L^{(n)}}$ and label them all ``SPA [ours].'' The ``Truth'' in both plots is computed from~\eqref{eq:fourier-tilt} using high precision Gaussian quadrature integration (see Section~\ref{sec:numerical} for details). ``MA'' refers to the moments accountant~\cite{Abadi_MomentAccountant} (with the latest implementation, see the ``Baselines'' paragraph in Section~\ref{sec:numerical}); 
    ``GDP'' to Gaussian DP \cite{GaussianDP}; 
    and FFT to the accountant proposed in~\cite{Numerical_compositionDP} with $\epsilon_{\error} = 0.1$ and $\delta_{\error} = 10^{-10}$.} 
    \label{fig:DP-SGD-exp}
\end{figure}

These features of the proposed SPA will be made mathematically precise in the sequel. The rest of the paper is organized as follows. First, we review central DP notions in Section~\ref{sec:setup}. Then, we derive the SPA in Section~\ref{sec:exp_tilting}, as well as a more precise variation of~\eqref{SPA_approx1} that accounts for higher-order derivatives of the CGF $K(t)$. Error bounds are given in Section~\ref{sec:BE}, where they are also theoretically compared to a standard CLT approach. Finally, we perform numerical experiments comparing SPA with state-of-the-art DP accounting techniques in Section~\ref{sec:numerical}.

\section{Setup and Related Work} \label{sec:setup}

We start with formally defining several central objects of DP accounting: privacy curves, composition, and privacy loss random variables. To do so more compactly, we first give the definition of the \textit{hockey-stick divergence}.  
\begin{definition}[Hockey-stick divergence] \label{Hockey-stick divergence}
Given a constant $\gamma\geq 1$ and a pair of probability measures $(P, Q)$ defined on $(\calX,\Sigma)$, the hockey-stick divergence from $P$ to $Q$ is defined as 
\begin{equation}
    \E_\gamma(P\|Q) \define (P-\gamma Q)^+(\calX) = \int_{\calX} \ \textnormal{d}(P - \gamma Q)^+
=\sup_{\calA \in \Sigma} \ P(A)-\gamma\, Q(A),
\end{equation}
where $\mu^+$ is the positive part of a signed measure $\mu$.
\end{definition}
This divergence---also known as the $\E_\gamma$-divergence in the information theory literature, see, e.g.,~\cite{polyanskiy2010channel, E_gamma}---provides an alternative expression for the approximate differential privacy, defined in Definition~\ref{def:DP}.

\begin{lemma}(\cite{Barthe:2013_Beyond_DP}) \label{lem:DP Egamma}
A randomized mechanism $\calM$ is $(\veps,\delta)$-DP, for $\veps\geq 0$ and $\delta\in [0,1]$, if and only if \footnote{We are abusing notation slightly by using $\calM(D)$ and $\calM(D')$ to denote the distributions of the output of the mechanisms running on neighboring datasets $D$ and $D'$, respectively. Also, the notation $D\sim D'$ is used here to indicate that the datasets $D$ and $D'$ are neighbors.}
\begin{equation}
    \sup_{D\sim D'} \ \E_{e^\veps}(\calM(D) \| \calM(D')) \le \delta.
\end{equation}
\end{lemma}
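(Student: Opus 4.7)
The claim is essentially a definitional reformulation, so my plan is short: unpack $(\veps,\delta)$-DP into an inequality about signed measures, then identify that inequality with the hockey-stick divergence via the Hahn--Jordan decomposition.

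First, fix a pair of neighboring datasets $D \sim D'$ and write $P \define \calM(D)$, $Q \define \calM(D')$ for the output distributions on $(\calX,\Sigma)$. Definition~\ref{def:DP} says that $(\veps,\delta)$-DP holds iff $P(E) - e^\veps Q(E) \le \delta$ simultaneously for every event $E \in \Sigma$ and every neighbor pair. For a fixed $(D,D')$ this is equivalent to $\sup_{E \in \Sigma}\, [P(E) - e^\veps Q(E)] \le \delta$, so the overall DP requirement becomes $\sup_{D \sim D'} \sup_{E \in \Sigma}\, [\calM(D)(E) - e^\veps \calM(D')(E)] \le \delta$.

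It then remains to match $\sup_{E}\, [P(E) - \gamma Q(E)]$ with the expressions defining $\E_\gamma(P\|Q)$ in Definition~\ref{Hockey-stick divergence}, for $\gamma = e^\veps$. I would do this by considering the finite signed measure $\mu \define P - \gamma Q$ on $(\calX,\Sigma)$ and invoking the Hahn--Jordan decomposition: there exists a measurable set $A^\star \in \Sigma$ such that $\mu(B \cap A^\star) \ge 0$ and $\mu(B \setminus A^\star) \le 0$ for every $B \in \Sigma$. For any event $E$ one then has $\mu(E) = \mu(E \cap A^\star) + \mu(E \setminus A^\star) \le \mu(E \cap A^\star) \le \mu(A^\star)$, while $\mu(A^\star)$ itself is attained by $E = A^\star$; hence $\sup_E \mu(E) = \mu(A^\star) = \mu^+(\calX) = \int_{\calX}\textnormal{d}\mu^+$, which are exactly the three equivalent expressions in Definition~\ref{Hockey-stick divergence}.

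Combining the two steps yields $(\veps,\delta)$-DP $\Longleftrightarrow$ $\E_{e^\veps}(\calM(D)\|\calM(D')) \le \delta$ for every $D \sim D'$, which is the stated lemma. There is essentially no obstacle: the only nontrivial external input is Hahn--Jordan, which guarantees the sup over events is attained by the positive set of $P - e^\veps Q$; everything else is bookkeeping. One minor care point is that $P$ and $Q$ need not be mutually absolutely continuous, but the decomposition applies to any finite signed measure on a measurable space, so this is harmless.
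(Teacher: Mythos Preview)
Your proof is correct. Note, however, that the paper does not actually supply its own proof of this lemma: it is simply cited from \cite{Barthe:2013_Beyond_DP} and stated without argument. Your Hahn--Jordan approach is the standard way to establish the equivalence, and it works cleanly here since Definition~\ref{Hockey-stick divergence} already records the equality $\E_\gamma(P\|Q) = (P-\gamma Q)^+(\calX) = \sup_{A\in\Sigma} [P(A)-\gamma Q(A)]$, so strictly speaking you only need the first bookkeeping step (matching Definition~\ref{def:DP} to the sup-over-events form) and can then invoke Definition~\ref{Hockey-stick divergence} directly.
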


\paragraph{Privacy curves.} Given a mechanism $\calM$, the goal is to find a collection of pairs $(\veps,\delta)$ for which $\calM$ is $(\veps,\delta)$-DP, which gives rise to the concept of a privacy curve.

\begin{definition}[Optimal privacy curve] \label{def:OptimalPC}
The \emph{optimal privacy curve} of a mechanism $\calM$ is the function $\delta_{\calM} \colon [0,\infty)\to [0,1]$ such that, for each $\varepsilon \ge 0$, $\delta=\delta_{\calM}(\varepsilon)$ is the smallest number for which $\calM$ is $(\varepsilon,\delta)$-DP.
In other words,
\begin{equation}
    \delta_{\calM}(\varepsilon) \define \sup_{D\sim D'} \ \E_{e^\eps}(\calM(D)\|\calM(D')).
\end{equation}
We also define the (left-)inverse curve $\varepsilon_{\calM}(\delta)$ by
\begin{equation}
    \varepsilon_{\calM}(\delta) \define \inf\left\{ \,  \varepsilon \ge 0 \  : \ \delta \le  \delta_{\calM}(\varepsilon) \, \right\}.
\end{equation}
\end{definition}

According to this definition, $\calM$ is $(\veps,\delta)$-DP for \emph{every} $\veps\ge 0$ and $\delta \in [\delta_\calM(\veps),1]$. An important example is the Gaussian mechanism $\calM(D) = \calN(f(D),\sigma^2I_d)$, where $f$ is an $\BR^d$-valued query function having global $\ell_2$-sensitivity $1$. The optimal privacy curve of this mechanism is given by 
$\delta_{\calM}(\veps) = \Phi(-\veps \sigma + \frac{1}{2\sigma}) - e^\veps \ \Phi(-\veps \sigma - \frac{1}{2\sigma})$ ~\cite[Theorem~8]{Improving_Gaussian}, where $\Phi(x) = \frac{1}{\sqrt{2\pi}}\int_{-\infty}^xe^{-t^2/2}\text{d}t$.
According to Definition~\ref{def:OptimalPC}, one needs to compute $\E_{e^\eps}(\calM(D)\|\calM(D'))$ for all pairs of neighboring datasets $D$ and $D'$ to characterize $\delta_\calM(\eps)$. An alternative approach is to identify a pair of distributions $(P, Q)$ that dominates $\E_{e^\eps}(\calM(D)\|\calM(D'))$ for any neighboring $D$ and $D'$. This leads to the following definition. 

\begin{definition}[Dominating pair of distributions \cite{zhu2022optimal}]\label{def:dominating}
A pair of probability measures $(P, Q)$ is said to be a pair of dominating distributions for $\calM$ if 
for every $\eps\geq 0$ 
\begin{equation}
    \sup_{D\sim D'}~\E_{e^\eps}(\calM(D) \| \calM(D')) \le \E_{e^\eps}(P\| Q). 
\end{equation}
 If equality is achieved for every $\eps\geq 0$, then $(P,Q)$ is said to be \emph{tightly} dominating distributions for $\calM$.
\end{definition}

It is known~\cite[Proposition~10]{zhu2022optimal} that all mechanisms have a pair of tightly dominating distributions. This therefore leads to an alternative characterization of the optimal privacy curve, namely,  $\delta_\calM(\eps) = \sE_{e^\eps}(P\|Q)$ for tightly dominating distributions $(P, Q)$. It is not hard to show that 
\begin{equation}\label{HS_diver_L}
    \sE_{e^{\eps}}(P\|Q)
    =\bbE\left[\left(1-e^{\eps} \frac{\text{d}Q}{\text{d}P}(X)\right)^+\right]
    =\bbE\left[\left(1- e^{-(L-\eps)}\right)^+\right],
\end{equation}
where $L\define \log\frac{\text{d}P}{\text{d}Q}(X)$,
with $X\sim P$, is a PLRV for mechanism $\calM$, provided the equivalence of measures $P\sim Q$. 

We associate with each random variable $L$ its induced privacy curve, as follows.

\begin{definition}[Privacy curve of a random variable] \label{def:PLRV}
For a random variable $L$, its induced privacy curve $\delta_L:[0,\infty)\to [0,1]$ is defined by
\begin{equation}
    \delta_L(\eps) := \BE\left[\left(1- e^{-(L-\eps)}\right)^+\right],
\end{equation}
and we denote the (left-)inverse by $\varepsilon_{L}(\delta) := \inf\left\{ \varepsilon \ge 0 ~ : ~ \delta \le  \delta_{L}(\varepsilon) \right\}$.
\end{definition}

Thus, the optimal privacy curve of any mechanism can be characterized by its PLRVs: if $(P,Q)$ is a tightly dominating pair for $\calM$, then the optimal privacy curve of $\calM$ is given in terms of its PLRV $L=\log \frac{\text{d}P}{\text{d}Q}(X)$, $X\sim P$, by $\delta_\calM = \delta_L$.

\paragraph{Composition of DP mechanisms.} 
The \emph{adaptive composition} of two mechanisms $\calM_1$ and $\calM_2$ is given by the mechanism $(\calM_1\circ \calM_2)(D) := (\calM_1(D),\calM_2(D,\calM_1(D)))$,  that is, $\calM_2$ takes as input both $D$ and the output of $\calM_1$.  We denote by $\calM^{\circ n}$ the adaptive composition of a single mechanism $\calM$ for $n$ times. In contrast, the composition is called \emph{non-adaptive} if the output of $\calM_2$ depends on $\calM_1(D)$ only through $D$. 

It is not hard to show that $\delta_{\calM^{\circ n}}(n\veps) \le n \delta_{\calM}(\veps)$, a result that is commonly called basic composition~\cite[Theorem~1]{Dwork-OurData}. A tighter composition theorem, called advanced composition, was shown in~\cite{dwork2010boosting}. 
The exact characterization of the privacy parameters of $\calM^{\circ n}$ have been derived by Kairouz et al. \cite[Theorem~9]{Kairouz_Composition_TIT} in terms of the privacy parameters of $\calM$. It was shown by Murtagh and Vadhan \cite[Theorem~1.5]{Vadhan_Murtagh} that computing exact parameters for $\calM_1\circ \dots\circ \calM_n$ is in general \#P-complete, and thus, infeasible.  Nevertheless, the authors in~\cite{Vadhan_Murtagh} proposed a polynomial-time algorithm that is capable of approximating $\delta_{\calM_1\circ\dots\circ\calM_n}(\eps)$ for a given $\eps$ to arbitrary accuracy. However, its complexity scales at most as $n^3$, rendering it inefficient when considering composition of many mechanisms (i.e., large $n$).  
The lack of efficient algorithms for such an approximation has spurred several follow-up works, e.g.,~\cite{GaussianDP,koskela2020computing, koskela2021computing,koskela21a_FFT,Numerical_compositionDP, FasterPrivacyAccountant, ConnectingDots} to name a few. Most such works can best be described via the PLRV through the following theorem.     

\begin{theorem}[{\cite[Theorem~3.2]{GaussianDP}}]\label{thm:Dominating_Composition} Let $(P_j, Q_j)$ be a pair of tightly dominating distributions for mechanism $\calM_j$ for $j\in \{1, \dots, n\}$. Then $(P_1\times\dots\times P_n,  Q_1\times\dots\times Q_n)$ is a pair of dominating  distributions for $\calM_1\circ \dots\circ\calM_n$.  
\end{theorem}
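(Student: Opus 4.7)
My plan is to proceed by induction on $n$. The base case $n=1$ is immediate, since tight domination is in particular domination. For the inductive step, I write
\begin{equation*}
\calM_1 \circ \cdots \circ \calM_n \;=\; \bigl(\calM_1 \circ \cdots \circ \calM_{n-1}\bigr) \circ \calM_n,
\end{equation*}
and apply the inductive hypothesis to the first $n-1$ mechanisms. This reduces the problem to the two-mechanism statement: if $(P_1, Q_1)$ dominates $\calM_1$ and $(P_2, Q_2)$ tightly dominates $\calM_2$, then $(P_1 \times P_2, Q_1 \times Q_2)$ dominates the adaptive composition $\calM_1 \circ \calM_2$.

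To handle the two-mechanism case, I fix $\eps \ge 0$ and a neighboring pair $D \sim D'$. Let $A_1, B_1$ denote the laws of $\calM_1(D), \calM_1(D')$ and, for each $y$ in the output space of $\calM_1$, let $A_{2|y}, B_{2|y}$ denote the laws of $\calM_2(D, y), \calM_2(D', y)$. Denoting the resulting joint laws on pairs by $A_1 \otimes A_{2|\cdot}$ and $B_1 \otimes B_{2|\cdot}$, I need to show
\begin{equation*}
    \sE_{e^\eps}\bigl(A_1 \otimes A_{2|\cdot} \,\big\|\, B_1 \otimes B_{2|\cdot}\bigr) \;\le\; \sE_{e^\eps}\bigl(P_1 \times P_2 \,\big\|\, Q_1 \times Q_2\bigr)
\end{equation*}
for every $\eps \ge 0$. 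The cleanest route is to switch to the trade-off function formulation of Dong--Roth--Su: knowing the full family $\{\sE_{e^\eps}(\cdot \| \cdot)\}_{\eps \ge 0}$ is equivalent to knowing the trade-off function $T(\cdot, \cdot)$, and domination translates into a pointwise inequality between trade-off functions.

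The argument then rests on two properties of trade-off functions. First, they tensorize under products: $T(P_1 \times P_2, Q_1 \times Q_2) = T(P_1, Q_1) \otimes T(P_2, Q_2)$, where $\otimes$ is the monotone binary operation on trade-off functions introduced by Dong--Roth--Su. Second, they satisfy a one-sided adaptive-composition inequality: $T(A_1 \otimes A_{2|\cdot}, B_1 \otimes B_{2|\cdot}) \ge T(A_1, B_1) \otimes T(A_{2|y}, B_{2|y})$ uniformly in $y$. The dominating-pair hypotheses yield $T(A_1, B_1) \ge T(P_1, Q_1)$ and $T(A_{2|y}, B_{2|y}) \ge T(P_2, Q_2)$ for every $y$ (applying the latter to the neighbors $(D, y), (D', y)$), and chaining these through the monotonicity of $\otimes$ delivers the desired inequality for every $\eps$.

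The hard part is the adaptive-composition inequality: the product tensorization is a straightforward likelihood-ratio calculation, but the adaptive bound requires a hypothesis-testing argument showing that any test distinguishing the joint outputs can be simulated by sequentially testing the first coordinate and then, conditionally on the observed outcome, the second. An alternative approach that avoids trade-off functions is to write $\sE_{e^\eps}$ as a Lebesgue integral of tail probabilities of the corresponding PLRV and use Fubini with a conditional coupling, but this is notationally heavier and ultimately equivalent to the trade-off-function argument.
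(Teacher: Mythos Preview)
The paper does not give its own proof of this theorem: it is stated as a citation to \cite[Theorem~3.2]{GaussianDP} and used as a black box. Your proposal is essentially a sketch of how that cited result is proved in Dong--Roth--Su, namely via trade-off functions, the tensorization identity $T(P_1\times P_2, Q_1\times Q_2)=T(P_1,Q_1)\otimes T(P_2,Q_2)$, and the adaptive composition inequality for $\otimes$; the induction and the two-mechanism reduction are set up correctly.

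One small point worth tightening in your write-up: the equivalence between the family $\{\sE_{e^\eps}(\cdot\|\cdot)\}_{\eps\ge 0}$ and the trade-off function $T(\cdot,\cdot)$ that you invoke really requires the hockey-stick divergences for all $\gamma\ge 0$ (equivalently, for both orderings of the pair), not just $\gamma=e^\eps\ge 1$. The dominating-pair definition in \cite{zhu2022optimal} is stated that way, and the paper's Definition~\ref{def:dominating} (which restricts to $\eps\ge 0$) is a slight under-specification relative to what the trade-off-function argument needs. This is a definitional wrinkle in the paper rather than a flaw in your approach, but if you carry out the argument you should either work with the full-range definition or verify directly that domination for $\eps\ge 0$ in both orders suffices.
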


While this theorem does not necessarily provide a pair of tightly dominating distributions for the adaptive composition of mechanisms, it still leads to an upper bound for its privacy curve. In particular, together with Definitions~\ref{def:OptimalPC} and \ref{def:dominating}, Theorem~\ref{thm:Dominating_Composition} implies that  \begin{equation}
    \delta_{\calM_1\circ\dots\circ\calM_n}(\eps)\leq \E_{e^{\eps}}\left( P_1\times \cdots \times P_n, Q_1 \times \cdots \times Q_n \right),
\end{equation}
where $(P_j, Q_j)$ is a pair of tightly dominating distributions for $\calM_j$. Invoking the alternative expression for hockey-stick divergence given in \eqref{HS_diver_L}, and Definition~\ref{def:PLRV}, we can hence write
\begin{equation}\label{eq:UB_composition}
    \delta_{\calM_1\circ\dots\circ\calM_n}(\eps)\leq \bbE\left[\left(1- e^{-(L^{(n)}-\eps)}\right)^+\right]=\delta_{L^{(n)}}(\eps),
\end{equation}
where 
\begin{equation}\label{eq:Ln}
    L^{(n)} \define \sum_{i=1}^n L_i,
\end{equation}
and $L_i \define \log \frac{\text{d}P_i}{\text{d}Q_i}(X_i)$, $X_i\sim P_i$, are independent PLRVs for the $\calM_i$. This upper bound is the foundation for most recent numerical composition results (such as \cite{sommer2019privacy, GaussianDP, koskela2020computing, Numerical_compositionDP,zhu2022optimal}) and similarly serves as the linchpin for our proposed composition algorithm outlined in the subsequent sections. As a result, we give the upper bound in \eqref{eq:UB_composition} a name.
\begin{definition}\label{Def:CompositionCurve}
Let $(P_j, Q_j)$ be a pair of tightly dominating distributions\footnote{Recall from \cite[Proposition~10]{zhu2022optimal} that all mechanisms have such a pair.} for mechanism $\calM_j$ for $j\in \{1, \dots, n\}$, and consider the adaptive composition $\calM^{(n)}=\calM_1\circ \cdots \circ \calM_n$. Then, we define the \emph{composition curve} $\bar{\delta}_{\calM^{(n)}}:[0, \infty)\to [0,1]$ as  
\begin{equation}
    \bar{\delta}_{\calM^{(n)}}(\eps) \define \sE_{e^\eps}(P_1\times\dots\times P_n\|  Q_1\times\dots\times Q_n) = \bbE\left[\left(1- e^{-(L^{(n)}-\eps)}\right)^+\right]=\delta_{L^{(n)}}(\eps),
\end{equation}
where $L^{(n)}$ is defined in \eqref{eq:Ln}.
\end{definition}

\paragraph{Subsampling.} 
Subsampling is a fundamental tool in the analysis of differentially private mechanisms.
Informally speaking, subsampling entails applying a differentially private mechanism to a
small set of randomly sampled datapoints from a given dataset. There are several ways of formally defining 
the subsampling operator, see, e.g., \cite{Balle:Subsampling}. The most well-known one, Poisson subsampling, is parameterized by the subsampling rate $\lambda\in (0,1]$ which indicates the probability of selecting a datapoint. More formally, the subsampled datapoints from a dataset $D$ can be expressed as $\{x\in D~:~B_x = 1\}$, where $B_x$ is a Bernoulli random variable with parameter $\lambda$ independent for each $x\in D$. Given any mechanism $\calM$, we define the subsampled mechanism $\calM_\lambda$ as the composition of $\calM$ and the Poisson subsampling operator. 
Characterizing the privacy guarantees of subsampled mechanisms is the subject of ``privacy amplification by subsampling'' principle \cite{Shiva_subsampling}. This principle is well-studied particularly for characterizing the privacy guarantees of subsampled Gaussian mechanisms in the context of a variant of differential privacy, namely, R\'enyi differential privacy~\cite{Poisson_Subsample_RDP, Abadi_MomentAccountant, mironov2019r}.  We can mirror their formulation to characterize $\eps$ and $\delta$ for the subsampled Gaussian mechanisms. Recall that a Gaussian mechanism satisfies $\calM(\calD)=\calN(f(D),\sigma^2 I_d)$ where $f$ is a query function with $\ell_2$-sensitivity $1$. For the \emph{subsampled} Gaussian, the optimal privacy curve (of a single composition) is
\begin{equation}\label{eq:subsampling1}
    \delta_{\calM_\lambda}(\eps) = \max\big\{\sE_{e^\eps}(P\|Q), \sE_{e^\eps}(Q\|P)\big\},
\end{equation}
where $P = \calN(0, \sigma^2 I_d)$ and $Q = (1-\lambda) P + \lambda P'$, and $P'\sim \calN(\boldsymbol{e}_1, \sigma^2 I_d)$ where $\boldsymbol{e}_1$ is the first standard basis vector. 
In the following lemma, we show that the above maximum is always attained by $\sE_{e^\eps}(Q\|P)$ for any $\eps\geq 0$, and that it holds for a larger family of DP mechanisms (including Gaussian and Laplace mechanisms). A similar ordering bound was proved in \cite[Theorem 5]{Subsampling_renyi} for the R\'enyi divergence.

\begin{lemma} \label{lem:subsampling}
Fix a Borel probability measure $P$ over $\BR^n$ that is symmetric around the origin (i.e., $P(\calA)=P(-\calA)$ for every Borel $\calA\subset \BR^n$), and fix constants $(s,\lambda,\gamma)\in \mathbb{R}^n\times [0,1]\times [1,\infty)$. Let $T_sP$ be the probability measure given by $(T_sP)(\calA)=P(\calA-s)$, and let $Q= (1-\lambda)P+\lambda T_s P$. We have the inequality
\begin{equation}
\mathsf{E}_\gamma(P\|Q) \le \mathsf{E}_{\gamma}(Q\|P)
\label{eq:subsample_bound}
\end{equation}
Further, equality holds if and only if $  (\gamma-1)\: \lambda \: \|s\| \: \E_\gamma(Q\|P)\: = 0$.
\end{lemma}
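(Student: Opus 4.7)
The plan is to put both hockey-stick divergences into a common integrand form using the symmetry of $P$, and then establish the inequality by a pointwise comparison of integrands that differ by a manifestly nonnegative correction. I work with densities (writing $p$ for the Lebesgue density of $P$; the general Borel case is handled by taking densities with respect to the dominating measure $\mu := P + T_s P$, which is invariant under the reflection $x \mapsto s - x$ because this reflection swaps $P$ and $T_s P$ by symmetry of $P$).

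\textbf{Common integrand form.} Let $p_s(x) := p(x - s)$, so $Q$ has density $q = (1-\lambda) p + \lambda p_s$. Define
\begin{equation}
A := 1 - \gamma(1-\lambda), \quad B := \gamma\lambda, \quad C := \lambda, \quad D := \gamma - 1 + \lambda.
\end{equation}
Direct substitution yields $\mathsf{E}_\gamma(P\|Q) = \int (Ap - Bp_s)^+ \, dx$ and $\mathsf{E}_\gamma(Q\|P) = \int (Cp_s - Dp)^+ \, dx$. The change of variable $x \mapsto s - x$, together with $p(s - y) = p(y - s) = p_s(y)$ and $p_s(s - y) = p(-y) = p(y)$ (both by symmetry of $p$), transforms the second into
\begin{equation}
\mathsf{E}_\gamma(Q\|P) = \int (Cp - Dp_s)^+ \, dx,
\end{equation}
placing both divergences as integrals in the same pair $(p, p_s)$.

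\textbf{Pointwise comparison.} Setting $\alpha := (\gamma - 1)(1 - \lambda) \geq 0$, a direct check gives $C - A = D - B = \alpha$, so
\begin{equation}
Cp - Dp_s = (Ap - Bp_s) + \alpha(p - p_s).
\end{equation}
I claim $(Ap - Bp_s)^+ \leq (Cp - Dp_s)^+$ pointwise. If $A \leq 0$ the left side is zero. If $A > 0$, then $B - A = \gamma - 1 \geq 0$ gives $B/A \geq 1$; hence wherever $Ap > Bp_s$ one has $p > (B/A) p_s \geq p_s$, so $\alpha(p - p_s) \geq 0$ and the identity gives $Cp - Dp_s \geq Ap - Bp_s > 0$. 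Integrating delivers \eqref{eq:subsample_bound}. For the equality claim, each factor in $(\gamma - 1)\lambda\|s\|\mathsf{E}_\gamma(Q\|P)$ separately produces equality: $\gamma = 1 \Rightarrow \alpha = 0$ and $A=C,\, B=D$; $\lambda = 0$ or $s = 0 \Rightarrow Q = P$; and $\mathsf{E}_\gamma(Q\|P) = 0 \Rightarrow \mathsf{E}_\gamma(P\|Q) = 0$ by the inequality itself. Conversely, when $\alpha > 0$ and $\{Cp > Dp_s\}$ has positive measure, the pointwise bound is strict on that set and the integrated inequality is strict.

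\textbf{Main obstacle.} The decisive step---and the only place the symmetry hypothesis is actually used---is the reflection argument that rewrites $(Cp_s - Dp)^+$ as $(Cp - Dp_s)^+$. Once both integrands involve $p$ and $p_s$ in the same roles, the nonnegativity of $\alpha$ and the inequality $B \geq A$ (both immediate from $\gamma \geq 1$) finish the argument with only the elementary case split on the sign of $A$.
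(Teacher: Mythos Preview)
Your proof of the inequality is correct and takes a genuinely different route from the paper's. The paper argues at the level of divergences: it introduces $G(t) = t\,\mathsf{E}_{1+(\gamma-1)/t}(P\|T_sP)$, observes it is increasing (since $\gamma' \mapsto \mathsf{E}_{\gamma'}$ is decreasing), and evaluates $G$ at the two points $t = \lambda$ and $t = \gamma\lambda + 1 - \gamma$, which after simplification become $\mathsf{E}_\gamma(Q\|P)$ and $\mathsf{E}_\gamma(P\|Q)$ respectively; symmetry enters as the identity $\mathsf{E}_{\gamma'}(P\|T_sP) = \mathsf{E}_{\gamma'}(T_sP\|P)$. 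You instead work pointwise: the reflection $x \mapsto s - x$ swaps $p$ and $p_s$, and then the elementary relations $C-A = D-B = (\gamma-1)(1-\lambda) \ge 0$ and $B - A = \gamma - 1 \ge 0$ give the integrand inequality directly. Your argument is more self-contained; the paper's exposes a clean monotonicity structure and avoids any case split on the sign of $A$.

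One caution on the equality characterization: your converse establishes strictness only when $\alpha = (\gamma-1)(1-\lambda) > 0$ and $\mathsf{E}_\gamma(Q\|P) > 0$, which is \emph{not} the same as the product $(\gamma-1)\lambda\|s\|\mathsf{E}_\gamma(Q\|P)$ being nonzero --- the condition $\lambda \in (0,1)$ is not the condition $\lambda \neq 0$. Indeed, when $\lambda = 1$ one has $Q = T_sP$ and your own reflection argument gives $\mathsf{E}_\gamma(P\|Q) = \mathsf{E}_\gamma(Q\|P)$, so equality holds even though the stated product can be positive. The paper's proof does not treat the equality case at all, and the lemma's ``only if'' clause appears to be misstated at $\lambda = 1$; your argument in fact identifies the correct equality set.
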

\begin{proof}
See Appendix~\ref{app:subsampling}.
\end{proof}

In light of this lemma, the privacy guarantee of a subsampled Gaussian mechanism is fully characterized by computing only $\sE_{e^\eps}((1-\lambda)P + \lambda T_s P\|P)$, where $P = \calN(0, \sigma^2 I_d)$. Based on this result, for our numerical results in Section~\ref{sec:numerical}, we only compute the saddle-point accountant with this order of $P$ and $Q$.

\section{{\Sdp} Accountant Methodology}\label{sec:exp_tilting}

This section describes the foundations of our {\sdp}  accountants. We first introduce our key tool, namely, exponential tilting. As a warm-up, we use exponential tilting to rederive---in a new way---two elementary variations of the moments accountant~\cite{Abadi_MomentAccountant} that render tighter privacy guarantees. We then discuss the connection between exponential tilting and the method of steepest descent, and describe in details how these two methods lead to our {\sdp} accountants.

We assume that, for the purposes of bounding or approximating the optimal privacy curve $\delta_{\calM}(\epsilon)$ for a mechanism $\calM$, we have access to a PLRV $L$ such that $\delta_{\calM}(\epsilon)\le \delta_L(\epsilon)$. In most cases, the relevant variable is $L^{(n)}=L_1+\cdots+L_n$, such that, as discussed above, $\delta_{L^{(n)}}$ is the composition curve. However, in this section we derive approximations for any variable $L$. We assume that the distribution of $L$ is known to an extent that expectations of functions of $L$---i.e., $\bbE[f(L)]$---can be computed. We proceed to derive approximations and bounds on $\delta_L(\epsilon)$ based on such expectations.

\subsection{Exponential Tilting}\label{sec:tilting}

The moment generating function (MGF) for a PLRV $L$ is given by
\begin{equation}
    M_L(z)\triangleq \bbE[e^{zL}]
\end{equation}
and the cumulant generating function (CGF) is given by
\begin{equation}
    K_L(z)\triangleq \log M_L(z).
\end{equation}
Let $\calC$ be the set of positive real numbers $t$ where these functions are defined and finite for $L$. Note that $\calC$ is an interval,\footnote{This can be seen by conditioning on the sign of $L$ to obtain $M_L(t')\le M_L(t)+1$ for $0\le t' \le t$.} and we assume that this is a nontrivial interval (i.e., not just $\{0\}$). 
We may also evaluate $M_L$ at complex arguments  for any $z$  such that $\Re(z)\in \mathcal{C}$,  since for $z=t+is$ and $t\in \mathcal{C}$ we have
\begin{equation}
    \left| M_L(z) \right|  \leq M_L(t).
\end{equation}
Of course, $M_L(is)$ is the characteristic function of $L$.

We may rewrite the induced privacy curve for $L$ using the Plancherel-Parseval identity as\footnote{Technically, this is an improper use of the identity, as the function $(1-e^{-x})^+$ is not integrable. We consider \eqref{eq:parseval} to be a formal equality, and we will subsequently provide a justification that this equality does indeed hold, provided the integral in \eqref{eq:parseval} is interpreted as taking a contour through the complex plane that avoids the pole at $0$.} 
 \begin{align}
     \delta_{L}(\epsilon)&= \mathbb{E} \left[\left(1-e^{-(L-\epsilon)}\right)^+\right]\\
     &= \frac{1}{2\pi} \int_{-\infty}^\infty  e^{-is\epsilon} g(is)  M_L(is) \ ds, \label{eq:parseval}
\end{align}
where 
\begin{equation}
    g(z)\triangleq z^{-1}(1+z)^{-1}
\end{equation} is the Laplace transform of the function $(1-e^{-x})^+$.  We can take any integration path parallel to the imaginary axis in the above integral as long as the real part $t$ of the line is greater than zero and belongs to the interior of $\mathcal{C}$, i.e.,
\begin{align}
     \delta_{L}(\epsilon)&= \frac{1}{2\pi i} \int_{t-i \infty}^{t+i\infty} e^{-z\epsilon}M_L(z)g(z) \ dz. \label{eq:parallel_path}
\end{align}
The fact that the integrals in \eqref{eq:parseval} and \eqref{eq:parallel_path} are equivalent is an immediate (and somewhat magical) consequence of Cauchy's integral theorem. However, an alternative way to derive this equivalence uses \emph{exponential tilting} of distributions; this method, while more cumbersome, provides valuable intuition that will also be used to derive precise bounds on the DP achievable for composed mechanisms. Exponential tilting is defined as follows.

\begin{definition}
The \emph{exponential tilting} with parameter $t\in \BR$ of a random variable $L$ having a finite CGF at $t$ is the random variable $\tilde{L}$ whose probability measure is given by 
\begin{equation}
    P_{\tilde{L}}(B) := \frac{1}{M_L(t)}\int_B e^{tu} \ dP_L(u)
\end{equation} 
for any Borel set $B$.
If $L$ has PDF $p_L$, then $\tilde{L}$ is given by its PDF $p_{\tilde{L}}(\ell)=e^{t\ell}p_L(\ell)/M_L(t)$.
\end{definition} 

A key feature of exponential tilting is that, for independent $L_i$, if $L=L_1+\cdots+L_n$, then $\tilL=\tilL_1+\cdots+\tilL_n$, where $\tilL_i$ is the exponential tilting of $L_i$, and again $\tilL_1,\ldots,\tilL_n$ are independent. This fact will be critical in deriving sharp bounds on composition curves, as the composition curve involves a PLRV that can be written as a sum of independent variables. We will return to this analysis in Section~\ref{sec:BE}, in which we derive bounds on the composition curve, but the analysis in this section applies to an arbitrary PLRV $L$.

The expectation of any function of $\tilL$ can be written in terms of $L$ as
\begin{equation}
    \bbE[f(\tilL)]=\frac{\bbE[e^{tL}f(L)]}{M_L(t)}.
\end{equation}
Thus, the MGF of the tilted variable $\tilL$ is given by
\begin{align}
    M_{\tilL}(z)&=
    \bbE[e^{z \tilL}]=\frac{\bbE[e^{tL} e^{zL}]}{M_L(t)}=\frac{M_L(t+z)}{M_L(t)}.
\end{align}
In addition, $K_L'(t)$ and $K_L''(t)$ can be interpreted as the expectation and variance respectively of $\tilL$ with parameter $t$. Similarly, expectations of functions of $L$ can be written in terms of $\tilL$ as
\begin{equation}
    \bbE[f(L)]=M_L(t) \,\bbE[e^{-t\tilL}f(\tilL)].
\end{equation}
Thus, we can write the privacy curve in terms of the tilted variable $\tilL$ as
\begin{align}
    \delta_{L}(\epsilon)&=
    \bbE\left[\left(1-e^{-(L-\epsilon)}\right)^+\right]
    \\
    &=M_L(t)\,\bbE\left[e^{-t\tilL}\left(1-e^{-(\tilL-\epsilon)}\right)^+\right].\label{eq:tilted_expectation}
\end{align}
Applying the Plancherel-Parseval identity to the expectation in \eqref{eq:tilted_expectation} gives
\begin{align}
    \delta_{L}(\epsilon)&=M_L(t) \frac{1}{2\pi} \int_{-\infty}^\infty e^{-(t+is)\epsilon} M_{\tilL}(is)g(t+is)ds
    \\&=\frac{1}{2\pi} \int_{-\infty}^\infty e^{-(t+is)\epsilon} M_{L}(t+is)g(t+is)ds\label{eq:fourier-tilt0}
    \\&=\frac{1}{2\pi i} \int_{t-i \infty}^{t+i\infty} 
    e^{-z\epsilon} M_L(z)g(z)dz, \label{eq:fourier-tilt}
\end{align}
where in \eqref{eq:fourier-tilt0} we have used the form of the MGF of $\tilL$. Therefore, we see that taking an integration path at real part $t$ is equivalent to exponential tilting with parameter $t$. This observation will be key to derive the {\sdp} accountant later in this section. First, however, we illustrate how exponential tilting can be used to re-derive known improvements to the moments accountant.

\subsection{Elementary Improvements of the Original Moments Accountant Method} \label{sec:ma}

The goal of this section is to motivate the method of steepest descent as a more accurate refinement to the moments accountant. We do this by showing that the moments accountant method proposed in~\cite{Abadi_MomentAccountant} is a simple but crude upper bound on~\eqref{eq:tilted_expectation}. We then re-derive two tighter bounds using the language of exponential tilting. Although these bounds already exist in the literature, the usage of exponential tilting to derive them is novel. 

 Define
\begin{equation}
\label{eq:fdefs}
    \bar{f}(x,t)\triangleq e^{-xt}\left(1-e^{-x}\right)^+~\mbox{and}~f(x,t)\triangleq e^{-xt}\left(1-e^{-x}\right).
\end{equation}
It follows directly from  \eqref{eq:tilted_expectation} that 
\begin{equation}
    \label{eq:MAvar}
    \delta_L(\epsilon) = e^{K_L(t)-\epsilon t}~ \bbE\left[\bar{f}\left(\tilL -\epsilon ,t\right) \right].
\end{equation}
The moments accountant bound can be  derived from \eqref{eq:MAvar} by observing that $\bar{f}(x,t)\leq 1$ for all $x\in \mathbb{R}$ and $t\geq0$ and, consequently, 
\begin{equation}
    \label{eq:MA}
     \delta_L(\epsilon) \leq \inf_{t\geq 0} \ \exp\left(K_L(t)-\epsilon t\right).
\end{equation}
The above expression is exactly the method proposed in \cite{Abadi_MomentAccountant}. The minimizing $t$ is given by 
\begin{equation}
    t_{\text{MA}} \define \left\lbrace \begin{array}{cl}
    \text{the unique solution to } K_L'(t)=\eps, & \text{ if } \BE[L]\le \eps, \\
    0, & \text{ if } \BE[L] > \eps.
    \end{array} \right.
\end{equation}
Note that the optimal tilting parameter $t_{\text{MA}}$ (in the case $\BE[L] \le \eps$) is the one for which the tilted PLRV satisfies $\bbE[\tilde{L}]=\epsilon$.

The bound $\bar{f}(x,t)\leq 1$ is rather crude, as illustrated in Fig.~\ref{fig:crudeapprox}. One simple improvement is to compute the maximum of $\bar{f}(\, \cdot\, ,t)$ for a fixed $t\geq 0$, rendering\footnote{The uniform bound in~\eqref{eq:fbar_max} is formally proved in Appendix~\ref{app:BE}.}
\begin{equation}\label{eq:fbar_max}
    \bar{f}(x, t)\leq t^t (t+1)^{-t-1} \qquad  \text{for all } x\in \mathbb{R},~t\geq 0.
\end{equation}
Since the right-hand side of \eqref{eq:fbar_max} is smaller than one except when $t=0$, we obtain a bound that is a strict tightening of \eqref{eq:MA}:
\begin{equation}
     \delta_L(\epsilon) \leq \inf_{t\geq 0} \  \exp\left(K_L(t)-\epsilon t+t\log t-(t+1) \log (t+1)\right).
     \label{eq:min_1}
\end{equation}
Here, the minimizing tilting $t$ is the solution of
\begin{equation}
    K'_L(t) = \epsilon+\log\left(\frac{t+1}{t}\right)
    \label{eq:tmin_1}
\end{equation}
if $\esup L > \eps$, and it is $t=\infty$ otherwise. 
This refinement of the moments accountant has been already proposed in \cite{asoodeh2020better} using a fundamentally different approach (see also \cite{canonne2020discrete, Balle2019HypothesisTI} for different proofs). 
We use this refined version of moments account, which is the latest version implemented in TensorFlow Privacy, in our numerical experiments (see the ``Baselines'' paragraph in Section~\ref{sec:numerical}).   

\begin{figure}[!t]
    \centering
    \includegraphics[width=0.7\textwidth]{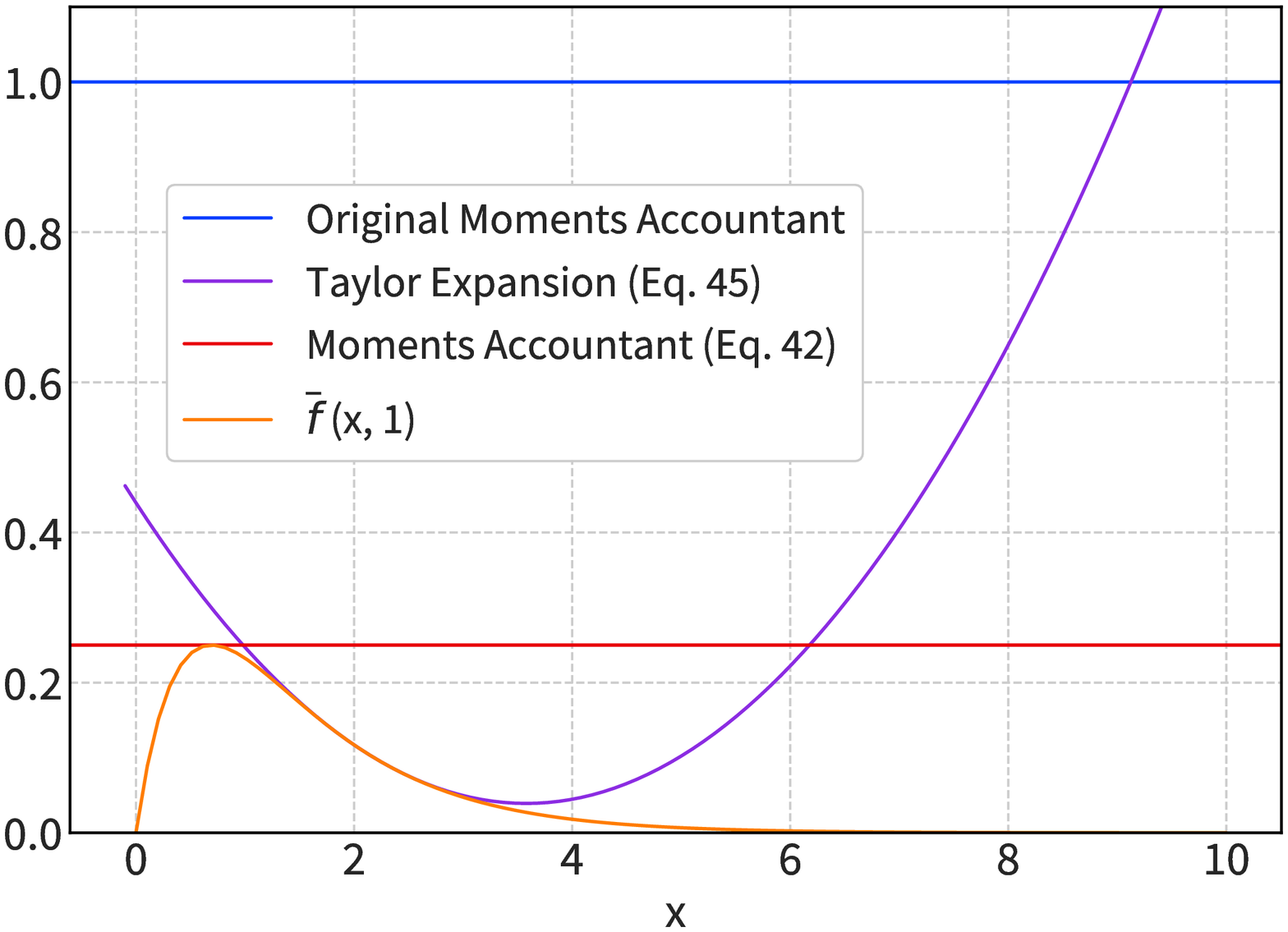}
    \caption{Comparison of $\bar{f}(x,1)$ with the various upper bounds in Section~\ref{sec:ma}.}
    \label{fig:crudeapprox}
\end{figure}

A second elementary improvement of \eqref{eq:MA} can be derived by further bounding $\bar{f}(x,t)$. It is a simple exercise to show that the series expansion of $\bar{f}(x,t)$ around $x_0(t) \triangleq 3\log\left(\frac{t+1}{t} \right)$ yields a quadratic upper bound that holds for any $t\geq 0$: 
\begin{equation}
    \bar{f}(x,t) \leq f(x_0(t),t) + \frac{\partial f}{\partial x}(x_0(t),t)(x-x_0(t)) + \frac{1}{2}\frac{\partial^2 f}{\partial x^2}(x_0(t),t)(x-x_0(t))^2,
\end{equation}
where $f$ is defined in \eqref{eq:fdefs}. By selecting $t$ such that the mean of $\tilL -\epsilon $ is exactly $x_0(t)$, we arrive at the bound
\begin{equation}
    \label{eq:taybound}
     \delta_L(\epsilon) \leq  \exp\left(K_L(t_0)-\epsilon t_0\right) \left(f\left(x_0(t_0),t_0\right)+t_0\left( \frac{t_0}{1+t_0}\right)^{1+3t_0} K_L''(t_0)\right),
\end{equation}
where $t_0\geq 0$ satisfies
\begin{equation}
    K_L'(t_0) = \epsilon + 3\log\left(\frac{t_0+1}{t_0} \right).
    \label{eq:t_taybound}
\end{equation}
Observe that the bound \eqref{eq:taybound} involves two key ingredients: (i) a ``good'' choice of the mean of the tilted random variable $\tilL$ (also needed in \eqref{eq:MA} and \eqref{eq:MAvar}), and (ii) the variance (second cumulant) of $\tilL$, given by $K''_L$. Using \eqref{eq:fourier-tilt} as a starting point, we demonstrate next how the method of steepest descent combines these two ingredients to derive a more accurate approximation of $\delta_L(\epsilon).$

\subsection{The Method of Steepest Descent and the Saddle-point Approximation}\label{sec:saddlepoint}

To derive the {\sdp} approximation, we rewrite \eqref{eq:parallel_path} once more as
\begin{equation}
     \delta_{L}(\epsilon)= \frac{1}{2\pi i} \int_{t-i \infty}^{t+i\infty} e^{F_\epsilon(z) }dz\label{eq::E_F}
\end{equation}
where we have defined the exponent as
\begin{equation}
    F_\epsilon(z)\triangleq K_L(z) -z\epsilon -\log z -\log(1+z) .
\end{equation}
The key in the {\sdp} method is the choice of $t$. The \emph{{\sdp}} is defined as the real value $t_0$ for which $F'_\epsilon(t_0)=0$. We assume that $\eps < \esup L$. Let $K_L|_{\BR}$ be the restriction of the CGF to the real axis. Then, $K_L|_{\BR}$ is convex, and thus, $F_\epsilon|_{\BR}$ is strictly convex. Thus, the minimum of $F_\eps$ over the reals is unique and satisfies $F'_\epsilon(t_0)=0$, i.e.,
\begin{equation}
    \label{eq:tau0}
    K_L'(t_0)=\epsilon+\frac{1}{t_0}+\frac{1}{t_0+1}.
\end{equation}
Observe that the original moments accountant aims to solve $K_L'(t)=\epsilon$ without the last two terms. The use of the term \emph{{\sdp}} can be justified by writing the second-order Taylor expansion of $F_\epsilon(z)$ around $t_0$:
\begin{equation}
    F_\epsilon(z)\approx F_\epsilon(t_0)+\frac{(z-t_0)^2}{2}F_\epsilon''(t_0).
\end{equation}
Since $F_\epsilon|_{\BR}$ is strictly convex, $F_\epsilon''(t_0)>0$, so along the real line, $F_\epsilon(z)$ has a minimum at $z=t_0$, whereas parallel to the imaginary axis, $F_\epsilon(z)$ has a maximum at $z=t_0$ (because in this case the coefficient $(z-t_0)^2$ becomes negative).

From this point, two closely related paths can be taken to approximate \eqref{eq::E_F}. The first approach is a direct application of the method of steepest descent, where $F_\epsilon$ is expanded around the {\sdp} $t_0$. The second approach expands \emph{only} $K_L$ around $t_0$, which is equivalent to deriving the Edgeworth series expansion of the distribution of $\tilL$ with tilting $t_0$. We outline both approaches here, but note that the second path (i.e., the expansion of $K_L$) is more amenable towards an approximation error analysis (see Section~\ref{sec:BE}), while the first path leads to better approximations in our numerical experiments.

\subsubsection{Method of Steepest Descent}\label{sec:steepest_descent}

For a fixed $\epsilon$, we next introduce the function $G_{\epsilon,t_0}$ that captures the difference between $F_\epsilon$ and the first two terms of its series expansion around $t_0$
\begin{equation}
    G_{\epsilon,t_0}(z) \triangleq F_\epsilon(t_0 +z) - F_\epsilon(t_0)-\frac{z^2}{2} F''_\epsilon(t_0).
\end{equation}
Then \eqref{eq::E_F} becomes
\begin{align}
    \label{eq::Gint}
     \delta_{L}(\epsilon)&= \frac{e^{F_\epsilon(t_0)}}{2\pi} \int_{- \infty}^{\infty} e^{-s^2F_\epsilon''(t_0) /2}e^{ G_{\epsilon,t_0}(is)}ds.
\end{align}
Observe that $G_{\epsilon,t_0}(0)=0,$ $G'_{\epsilon,t_0}(0)=0$, $G''_{\epsilon,t_0}(0)=0$, and $G^{(k)}_{\epsilon,t_0}(0)=F_\epsilon^{(k)}(t_0)$ for $k\geq 3$. Thus, we can expand $G_{\epsilon,t_0}$ via the series expansion
\begin{equation}
    G_{\epsilon,t_0}(z)=\sum_{k=3}^\infty \frac{F_{\epsilon}^{(k)}(t_0)}{k!} z^k,
\end{equation}
which holds on an open strip parallel to the imaginary axis centered at $t_0+i\BR$. Furthermore, $e^{G_{\epsilon,t_0}}$ can be written as
\begin{equation}
    \label{eq:bell}
    e^{G_{\epsilon,t_0}(z)}
    =1+\sum_{k=3}^\infty \frac{z^k}{k!} B_k(0,0,F_\epsilon^{(3)}(t_0),\ldots, F_\epsilon^{(k)}(t_0))
\end{equation}
where $B_k(x_1,\ldots,x_k)$ is the $k$th Bell polynomial. Applying this expansion to \eqref{eq::Gint} gives
\begin{align}
    &\delta_{L}(\epsilon)
    =\frac{e^{F_\epsilon(t_0)}}{2\pi}
    \int_{-\infty}^\infty e^{-s^2 F_\epsilon''(t_0)/2}\left(
    1+\sum_{k=3}^\infty \frac{(is)^k}{k!} B_k(0,0,F_\epsilon^{(3)}(t_0),\ldots, F_\epsilon^{(k)}(t_0))
    \right)ds
    \\&=\frac{e^{F_\epsilon(t_0)}}{2\pi}
    \left(\sqrt{\frac{2\pi}{F_\epsilon''(t_0)}}
    +\sum_{k\ge 3,k\text{ even}}\frac{i^k 2^{(k+1)/2} \Gamma((k+1)/2)}{k! F_\epsilon''(t_0)^{(k+1)/2}}B_k(0,0,F_\epsilon^{(3)}(t_0),\ldots, F_\epsilon^{(k)}(t_0))
    \right)
    \\&=\frac{e^{F_\epsilon(t_0)}}{2\pi}
    \left(\sqrt{\frac{2\pi}{F_\epsilon''(t_0)}}
    +\sum_{m=2}^\infty \frac{(-1)^m\sqrt{2\pi}}{2^m m! F_\epsilon''(t_0)^{m+1/2}}
    B_{2m}(0,0,F_\epsilon^{(3)}(t_0),\ldots, F_\epsilon^{(2m)}(t_0))
    \right)
    \\&=\frac{e^{F_\epsilon(t_0)}}{\sqrt{2\pi F_\epsilon''(t_0)}}
    \left(1+\sum_{m=2}^\infty \frac{(-1)^m}{2^m m! F_\epsilon''(t_0)^m}
    B_{2m}(0,0,F_\epsilon^{(3)}(t_0),\ldots, F_\epsilon^{(2m)}(t_0))
    \right).\label{eq:spa_series}
\end{align}

Based on the series expansion in \eqref{eq:spa_series}, we can derive various approximations depending on how many terms we keep. In particular, the simplest approximation is to drop the entire summation in~\eqref{eq:spa_series}, to obtain the approximation
\begin{equation}
    \delta_{L}(\epsilon)\approx
    \delta_{L, \, \text{SP-MSD}}^{(0)}(\epsilon)\define \frac{e^{F_\epsilon(t_0)}}{\sqrt{2\pi F_\epsilon''(t_0)}} = \frac{e^{K_{L}(t_0)-\epsilon t_0}}{\sqrt{2\pi}\sqrt{t_0(t_0+1)K_{L}''(t_0)+ t_0^2+(1+t_0)^2}}.
    \label{eq:true_saddlepoint}
\end{equation}
This approximation was presented in \eqref{SPA_approx1} where $L$ was a PLRV for  $\calM^{\circ n}$ given by $L_1+\dots+L_n$ for $L_i$'s being i.i.d.\ copies of the PLRV of $\calM$.   
If, in general, we assume that for a composed mechanism with $n$ compositions, $F_\epsilon(t)=O(n)$ (which follows from $K_{L}(t)$ also being $O(n)$), then a better approximation can be obtained by keeping the terms in~\eqref{eq:spa_series} that are $O(1/n)$; this yields 
\begin{equation}
    \delta_{L}(\epsilon) \approx 
    \delta_{L,\,\text{SP-MSD}}^{(1)}(\epsilon)\define
    \frac{e^{F_\epsilon(t_0)}}{\sqrt{2\pi F''_\epsilon(t_0)}}\left( 1+\frac{1}{8}\frac{F^{(4)}_\epsilon(t_0)}{F''_\epsilon(t_0)^2}-\frac{5}{24}\frac{F_\epsilon^{(3)}(t_0)^2}{F''_\epsilon(t_0)^3} \right). 
    \label{eq:true_saddlepoint_ncomp}
\end{equation}
Our numerical experiments indicate that the above approximation is surprisingly accurate,\footnote{As mentioned in~\cite[Page 49]{Huber_book}, the {\sdp} method often gives ``fantastically accurate'' estimates in practice.} and it can achieve a \emph{relative} error in $\delta_{L}(\epsilon)$ or $\eps_L(\delta)$ that is $<1\%$---even for a moderate number of compositions ($n\approx 100$; see Figure~\ref{fig:dve})! This approximation  involves computing higher order derivatives of $K_{L}(t)$, which is not difficult for most mechanisms used in practice. We provide implementation details in Section \ref{sec:numerical}.

\subsubsection{Edgeworth Expansion of the Tilted PLRV}\label{sec:CLT}

We return to \eqref{eq::E_F} with the choice of the {\sdp} $t_0$ satisfying~\eqref{eq:tau0}. The integral in \eqref{eq::E_F} can be rewritten as
\begin{align}
     \delta_{L}(\epsilon)&= \frac{1}{2\pi i} \int_{t_0-i \infty}^{t_0+i\infty} \frac{e^{-\epsilon z}e^{K_{L}(z)}}{z(z+1)}dz
     \label{eq::E_K}\\
     &=  \frac{1}{2\pi} \int_{-\infty}^\infty \frac{e^{- \epsilon (t_0+is)}e^{K_{L}(t_0+is)}}{(t_0+is)(t_0+is+1)}ds. \label{eq:deltaedge}
\end{align}
Unlike the previous subsection, here we approximate the integral via a series expansion of $e^{K_{L}(t_0+is)}$. This technique is equivalent to an Edgeworth expansion~\cite{hall2013bootstrap} of the distribution of $\tilde{L}$. The advantage of this approach is that any error bound for the truncation of the Edgworth series can be directly applied to bound the approximation error for $\delta_{L}(\epsilon)$. We make use of this fact in the next section for deriving an error bound for the {\sdp} accountant.

Picking the tilting parameter $t_0$ as the solution of \eqref{eq:tau0}, and assuming $\tilL$ has a density function $p_{\tilL}(x;t_0)$, the corresponding Edgeworth series expansion is~\cite[Eq.~(5)]{reid_saddlepoint_1988}
\begin{equation}
    \label{eq:edge_tilted}
    p_{\tilL}(x;t_0) \hspace{-2pt} = \hspace{-2pt} \frac{\phi\left(z(x,t_0)\right)}{\sqrt{K_{L}''(t_0)}}\left(1+\frac{\rho_3(t_0)}{6}h_3\left(z(x,t_0)\right)+\frac{\rho_4(t_0)}{24}h_4\left(z(x,t_0)\right)+\frac{\rho_3(t_0)^2}{72}h_6\left(z(x,t_0)\right) + \hspace{-1.3pt} \dots \right)
\end{equation}
where we denote by $\phi(x)$ the zero-mean unit-variance normal density, $\rho_3(t) \define K_{L}^{(3)}(t)/K_{L}''(t)^{3/2} $ and $\rho_4(t) \define K_{L}^{(4)}(t)/K_{L}''(t)^{2} $ are the normalized cumulants, 
\begin{equation}
    z(x,t_0) \define  \frac{x-K_{L}'(t_0)}{\sqrt{K_{L}''(t_0)}},
\end{equation}
and $h_k(z) \define (-1)^k\phi(z)^{-1}\phi^{(k)}(z)$ are the Hermite polynomials; e.g., $h_3(z)=z^3-3z$. The Edgeworth expansion approach delineated herein is different from what can be found in the DP literature~\cite{Wang2022Edgeworth}. Specifically, we apply the Edgeworth expansion on the \emph{tilted} random variable $\tilL$, whereas the approach in~\cite{Wang2022Edgeworth} uses the Edgeworth expansion of the non-tilted version $L$. This distinction can yield very different approximations, in the sense of non-asymptotic rate of convergence; see the comparison between our approach and the standard CLT in the discussion after Theorem~\ref{thm:err}.

Keeping only the first term of the above expansion is equivalent to approximating $\tilL$ by a Gaussian with the same mean and variance (i.e., the central limit theorem approximation); applying this approximation to \eqref{eq:MAvar} gives
\begin{equation}\label{eq:CLT_approx}
    \delta_{L}(\epsilon)\approx \delta_{L, \, \text{SP-CLT}}(\epsilon)\define e^{K_{L}(t_0)-\epsilon t_0} \ \bbE\left[\bar{f}(Z-\epsilon, t_0) \right],
\end{equation}
where $Z\sim \mathcal{N}\left(K_{L}'(t_0),K_{L}''(t_0) \right)$ and $\bar{f}$ is as defined in~\eqref{eq:fdefs}. We derive an error bound for this approximation in the next section.

While the two methods of approximation---the steepest descent, and the central limit theorem---lead to different approximations, as seen in \eqref{eq:true_saddlepoint} and \eqref{eq:CLT_approx}, these two approximations are closely related, as described by the following simple result.

\begin{proposition}
For any $t$ in the interior of $\calC$,
\begin{equation}\label{eq:BE_SP_upper_bound}
    \delta_{L,\, \textup{SP-CLT}}(\varepsilon)
    \le 
    \frac{e^{K_L(t)-\eps t}}{\sqrt{2\pi K_{L}''(t)}\ t(t+1)}
    =\frac{e^{F_\epsilon(t)}}{\sqrt{2\pi K_{L}''(t)}}.
\end{equation}
\end{proposition}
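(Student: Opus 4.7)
The plan is to unwind the definition of $\delta_{L,\,\textup{SP-CLT}}$ and use only two ingredients: the support property $\bar f(x,t)=0$ for $x\le 0$, and a pointwise bound on the Gaussian density by its peak value $1/\sqrt{2\pi K_L''(t)}$. Concretely, I would read the statement in the natural way where the tilting parameter $t$ (rather than being locked to the saddle point $t_0$) is the free variable appearing on both sides, so the inequality is really saying that each ``SP-CLT estimate at tilting $t$'' admits the stated upper bound in closed form.

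First I would write
\begin{equation*}
\delta_{L,\,\textup{SP-CLT}}(\varepsilon)
= e^{K_L(t)-\varepsilon t}\,\bbE\!\left[\bar f(Z-\varepsilon,t)\right],
\qquad
Z\sim \mathcal N\!\left(K_L'(t),\,K_L''(t)\right),
\end{equation*}
with $\bar f(x,t)=e^{-xt}(1-e^{-x})^+$, and observe that $\bar f(x,t)$ vanishes on $(-\infty,0]$. Denote $\mu=K_L'(t)$ and $\sigma^2=K_L''(t)>0$ (positivity is automatic on the interior of $\calC$ by strict convexity of $K_L|_{\BR}$, unless $L$ is degenerate, a case that can be handled separately). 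Then
\begin{equation*}
\bbE\!\left[\bar f(Z-\varepsilon,t)\right]
= \int_{\varepsilon}^{\infty} \bar f(y-\varepsilon,t)\,\frac{1}{\sqrt{2\pi\sigma^2}}\,e^{-(y-\mu)^2/(2\sigma^2)}\,\textup{d}y
\le \frac{1}{\sqrt{2\pi K_L''(t)}}\int_{0}^{\infty}\bar f(u,t)\,\textup{d}u,
\end{equation*}
where the inequality uses only that the Gaussian density is bounded pointwise by its maximum $1/\sqrt{2\pi\sigma^2}$, followed by the substitution $u=y-\varepsilon$.

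Next I would compute the one-dimensional integral exactly: since $\bar f(u,t)=e^{-ut}-e^{-u(t+1)}$ for $u>0$ and $t>0$,
\begin{equation*}
\int_0^{\infty}\bar f(u,t)\,\textup{d}u
= \int_0^{\infty}\!\left(e^{-ut}-e^{-u(t+1)}\right)\textup{d}u
= \frac{1}{t}-\frac{1}{t+1}
= \frac{1}{t(t+1)}.
\end{equation*}
Combining the two displays and multiplying by the prefactor $e^{K_L(t)-\varepsilon t}$ yields the first form of the bound; the second form is then immediate from the definition $F_\varepsilon(t)=K_L(t)-\varepsilon t-\log t-\log(t+1)$, which gives $e^{F_\varepsilon(t)}=e^{K_L(t)-\varepsilon t}/[t(t+1)]$.

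There is no real obstacle to this argument: it is essentially a one-line calculation once one spots that $\bar f(\cdot,t)$ is supported on $[0,\infty)$ and that its $L^1$ norm equals $1/[t(t+1)]$. The only minor care needed is to ensure that the substitutions are legitimate and that $K_L''(t)>0$ so that the Gaussian density and its $L^\infty$ bound make sense, both of which hold on the interior of $\calC$.
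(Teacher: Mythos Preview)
Your proof is correct and follows essentially the same approach as the paper: bound the Gaussian density by its peak value $1/\sqrt{2\pi K_L''(t)}$, then use that $\int_0^\infty \bar f(u,t)\,\textup{d}u = 1/[t(t+1)]$. Your additional remarks on the interpretation of $t$ and the positivity of $K_L''(t)$ are accurate and match the paper's setup.
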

\begin{proof}
Let $Z\sim \mathcal{N}\left(K_L'(t),K_L''(t) \right)$ be the variable in the expectation in \eqref{eq:CLT_approx}. Its PDF is upper bounded by $p_Z(z)\le \frac{1}{\sqrt{2\pi K_{L}''(t)}}$. Thus
\begin{align}
    \bbE\left[e^{-t(Z-\eps)}\left(1-e^{-(Z-\eps)}\right)^+\right]
    &=\int_{\eps}^\infty p_Z(z)e^{-t(z-\eps)}\left(1-e^{-(z-\eps)}\right)dz
    \\&\le \frac{1}{\sqrt{2\pi K_{L}''(t)}} \int_{\eps}^\infty e^{-t(z-\eps)}\left(1-e^{-(z-\eps)}\right)dz
    \\&=\frac{1}{\sqrt{2\pi K_{L}''(t)}\ t(t+1)}.
\end{align}
Applying this bound to the definition of $\delta_{L,\, \text{SP-CLT}}(\varepsilon)$ in \eqref{eq:CLT_approx} completes the proof.
\end{proof}

Note that the only difference between the right-hand side of \eqref{eq:BE_SP_upper_bound} and $\dSDa{L}(\eps)$ is that the denominator involves $K_{L}''$ instead of $F_{\eps}''$.

\section{Error Bound Analysis by Applying Berry-Esseen to Tilts}\label{sec:BE}

While the approximations derived in the previous section are often very precise (see the numerical results in Section~\ref{sec:numerical}), they are merely \emph{approximations}, and do not provide any hard guarantees on the $(\epsilon,\delta)$-DP of a given mechanism. In this section, we derive upper and lower bounds on the achieved privacy compared to the approximation $\delta_{L^{(n)},\,\text{SP-CLT}}$ as given in~\eqref{eq:CLT_approx}. These bounds are derived by applying the Berry-Esseen theorem to~\eqref{eq:tilted_expectation}.

The following setup is fixed throughout this section. We consider the adaptive composition of $n$ mechanisms $\calM^{(n)} = \calM_1 \circ \cdots \circ \calM_n$, and assume that each constituent mechanism has a PLRV $L_j$ for which $\delta_{\calM_j}=\delta_{L_j}$. The composition curve $\bar{\delta}_{\calM^{(n)}}$ is then equal to $\delta_{L^{(n)}}$, where $L^{(n)}=L_1+\cdots+L_n$, and the $L_j$ are independent:
\begin{equation}
    \delta_{\calM^{(n)}} \le \bar{\delta}_{\calM^{(n)}} = \delta_{L^{(n)}}.
\end{equation}
Since the $L_j$ are independent, the CGF of $L^{(n)}$ is given by the sum
\begin{equation}
    K_{L^{(n)}}(t) \define \log \BE\left[ e^{tL^{(n)}} \right] = K_{L_1}(t) + \cdots+ K_{L_n}(t)
\end{equation}
where $K_j(t)=\log \bbE[e^{tL_j}]$ is the CGF of $L_j$. We also let 
\begin{equation}\label{eq:Pt_def}
\uP_t^{(n)} \define \sum_{j=1}^n \BE\left[ \left|\tilde{L}_j - \BE[\tilde{L}_j] \right|^3 \right], 
\end{equation}
where $\tilde{L}_j$ is the exponential tilting of $L_j$ with parameter $t$. Recall that the exponential tilting of $L^{(n)}$ with parameter $t$ is $\tilL^{(n)} = \tilL_1+\cdots+\tilL_n$, and the $\tilL_j$ are independent.

\subsection{Error Bounds for Arbitrary Tilts}

The following theorem gives error bounds on the approximation $\delta_{L^{(n)},\, \text{SP-CLT}}$.

\begin{theorem}\label{thm:BE}
For any $t$ in the interior of $\calC$, and any $\eps\ge 0$, there is a $\zeta\in[-1,1]$ such that
\begin{equation}\label{eq:BE_bounds}
\delta_{L^{(n)}}(\varepsilon)
=\exp\left(K_{L^{(n)}}(t)-\eps t\right)
\left(
\bbE\left[ e^{-t(Z-\varepsilon)}\left(1-e^{-(Z-\varepsilon)}\right)^+\right]
+\zeta \frac{t^t}{(1+t)^{1+t}}\cdot \frac{1.12\, \uP_t^{(n)}}{K_{L^{(n)}}''(t)^{3/2}}
\right),
\end{equation}
where $Z$ is Gaussian with mean $K_{L^{(n)}}'(t)$ and variance $K_{L^{(n)}}''(t)$.
\end{theorem}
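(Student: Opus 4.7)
The plan is to begin from the tilted representation derived in Section~\ref{sec:tilting}, namely $\delta_{L^{(n)}}(\eps) = e^{K_{L^{(n)}}(t)-\eps t}\,\bbE[\bar{f}(\tilL^{(n)}-\eps, t)]$ where $\bar{f}(x,t)=e^{-xt}(1-e^{-x})^+$ and $\tilL^{(n)}=\tilL_1+\cdots+\tilL_n$ is the sum of the independent $t$-tilts of the per-mechanism PLRVs. The target of the theorem is to approximate $\bbE[\bar{f}(\tilL^{(n)}-\eps, t)]$ by $\bbE[\bar{f}(Z-\eps,t)]$, where $Z$ is Gaussian with the same mean $K_{L^{(n)}}'(t)$ and variance $K_{L^{(n)}}''(t)$ as $\tilL^{(n)}$, and to control the resulting error using the Berry-Esseen theorem.

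The main tool is a standard ``transfer'' lemma: for any $g:\bbR\to\bbR$ of bounded variation that vanishes at $\pm\infty$,
$$\bbE[g(X)]-\bbE[g(Y)]=\int g'(y)\bigl(F_Y(y)-F_X(y)\bigr)\,dy,$$
obtained by a single integration by parts. It follows that $|\bbE[g(X)]-\bbE[g(Y)]|\le V(g)\cdot \sup_y|F_X(y)-F_Y(y)|$, where $V(g)=\int|g'|$ is the total variation of $g$. Applied to $g(\cdot)=\bar{f}(\cdot-\eps,t)$, a short calculation shows $\bar{f}(\cdot,t)$ vanishes on $(-\infty,0]$, increases to its unique maximum $\bar{f}(x^\star,t)=t^t/(t+1)^{t+1}$ at $x^\star=\log((t+1)/t)$ (see \eqref{eq:fbar_max}), and then decays monotonically to $0$. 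Hence its total variation equals exactly $2\,t^t/(t+1)^{t+1}$, which is where the constant $1.12=2\cdot 0.56$ in the statement originates.

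Next, I would invoke Berry-Esseen in its non-identically distributed form with Shevtsova's constant $0.56$: since $\tilL^{(n)}=\tilL_1+\cdots+\tilL_n$ is a sum of independent variables with total variance $K_{L^{(n)}}''(t)=\sum_j K_{L_j}''(t)$ and centered third absolute moments summing to $\uP_t^{(n)}$ as in \eqref{eq:Pt_def},
$$\sup_y \left| F_{\tilL^{(n)}}(y)-F_Z(y) \right|\;\le\; \frac{0.56\,\uP_t^{(n)}}{K_{L^{(n)}}''(t)^{3/2}}.$$
Combining this with the transfer lemma and the total variation of $\bar{f}(\cdot,t)$ yields
$$\left|\bbE[\bar{f}(\tilL^{(n)}-\eps,t)]-\bbE[\bar{f}(Z-\eps,t)]\right|\;\le\; \frac{t^t}{(t+1)^{t+1}}\cdot\frac{1.12\,\uP_t^{(n)}}{K_{L^{(n)}}''(t)^{3/2}}.$$
Multiplying through by $e^{K_{L^{(n)}}(t)-\eps t}$ and encoding the sign of the error into a free parameter $\zeta\in[-1,1]$ gives exactly \eqref{eq:BE_bounds}.

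I expect the main technical obstacle to be the rigorous handling of the integration-by-parts step, since $\bar{f}(\cdot,t)$ has only a piecewise-smooth derivative (with a corner at the origin and a sign change at $x^\star$). This is overcome by observing that $\bar{f}(\cdot,t)$ is continuous, of bounded variation, and vanishes at $\pm\infty$, so the identity above still holds when interpreted as a Riemann-Stieltjes integral, and $\int|g'|\,dy$ coincides with the total variation. A minor secondary point is to cite the correct non-i.i.d.\ Berry-Esseen constant; Shevtsova's $0.56$ is what produces the stated $1.12$, and no additional work beyond a black-box invocation is needed.
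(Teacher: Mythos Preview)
Your proposal is correct and follows essentially the same route as the paper: start from the tilted representation \eqref{eq:MAvar}, control $|\bbE[\bar f(\tilL^{(n)}-\eps,t)]-\bbE[\bar f(Z-\eps,t)]|$ by the Kolmogorov distance times a shape constant coming from the unimodality of $\bar f(\cdot,t)$, and invoke the non-i.i.d.\ Berry--Esseen bound with constant $0.56$. The only cosmetic difference is that the paper writes this step via the layer-cake formula (so the constant appears as $2f_{\max}$ with $f_{\max}=t^t/(t+1)^{t+1}$) rather than via integration by parts and total variation, but for a unimodal function rising from $0$ to $f_{\max}$ and back to $0$ these are the same quantity.
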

\begin{proof}
See Appendix~\ref{app:BE}.
\end{proof}

Note that, omitting the $\zeta$ term in the right-hand side of \eqref{eq:BE_bounds} gives exactly $\delta_{L^{(n)},\,\text{SP-CLT}}$ (with the tilt $t$ being freely chosen from the interior of $\calC$). Thus, Theorem~\ref{thm:BE} can be equivalently written
\begin{equation}
    \left|\delta_{L^{(n)}}(\varepsilon)-\delta_{L^{(n)},\,\text{SP-CLT}}(\varepsilon)\right|\le \text{err}_{\text{SP}}(\eps)
\end{equation}
where we have the error term\footnote{To reduce cluttering, we suppress the dependence on $n$ and $t$ from the notation $\errt(\eps)$.}
\begin{align}
    \text{err}_{\text{SP}}(\eps)&
    \define
    \exp\left(K_{L^{(n)}}(t)-\eps t\right)\frac{t^t}{(1+t)^{1+t}}\cdot \frac{1.12\, \uP_t^{(n)}}{K_{L^{(n)}}''(t)^{3/2}}. \label{eq:berry_esseen_delta_error}
\end{align}

\begin{remark}
While Theorem~\ref{thm:BE} holds for any positive value of $t$ around which the MGF is finite, a natural choice of $t$ is to use the {\sdp} $t_0$ itself, defined as the solution to \eqref{eq:tau0}.
\end{remark}

The following definitions will be used to express the approximation function  $\delta_{L^{(n)},\,\text{SP-CLT}}(\varepsilon)$ without the Gaussian expectation. Proposition~\ref{prop:gaussian_computation} below allows for computing the approximation function in finite time (given $t$, $K_{L^{(n)}}(t)$, $K_{L^{(n)}}'(t)$, and $K_{L^{(n)}}''(t)$).

\begin{definition}
The (Gaussian) $Q$-function is defined by $Q(z) \define 1 - \Phi(z)$. We define the function $q:\BR \to (0,\infty)$ by $q(z) \define Q(z)\cdot \sqrt{2\pi} \ e^{z^2/2}$.
\end{definition}
\begin{remark}
The function $q$ is closely related to the \emph{Mills' ratio} $\textup{M}(z)\define e^{z^2}\int_z^\infty e^{-u^2} \, du$, and from known inequalities on the $Q$ function one may infer that 
\begin{equation}
    \frac{2}{z+\sqrt{z^2+4}} < q(z) \le \frac{2}{z+\sqrt{z^2+8/\pi}}
\end{equation}
for all $z\ge 0$~\cite[Section 7.8]{NIST}. In particular, one obtains $q(z)< \min(1/z,\sqrt{\pi/2})$ for all $z> 0$, and $q(z)\sim 1/z$ as $z\to \infty$.
\end{remark}
\begin{proposition}\label{prop:gaussian_computation}
For any $t$ in the interior of $\calC$ and $\epsilon\ge 0$, let
\begin{align}
    \gamma &\define \frac{K_{L^{(n)}}'(t)-\varepsilon}{\sqrt{K_{L^{(n)}}''(t)}},&
    \alpha &\define \sqrt{K_{L^{(n)}}''(t)} \ t - \gamma,&
    \beta &\define \sqrt{K_{L^{(n)}}''(t)} \ (t+1) - \gamma.
\end{align}
Then
\begin{equation}
    \delta_{L^{(n)},\, \textup{SP-CLT}}(\varepsilon) = 
    \exp\left({K_{L^{(n)}}(t)-\varepsilon t-\gamma^2/2} \right)
    \frac{q(\alpha)-q(\beta)}{\sqrt{2\pi}} .
    \label{eq:berry_esseen_delta}
\end{equation}
\end{proposition}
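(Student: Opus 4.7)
The proof is a direct Gaussian integral computation after unpacking the definition of $\delta_{L^{(n)},\,\textup{SP-CLT}}(\varepsilon)$ in~\eqref{eq:CLT_approx}. The strategy is to standardize $Z$, split the clamped exponential into two pieces, and for each piece complete the square in the Gaussian integrand so that the remaining integral becomes a standard tail probability $Q(\,\cdot\,)$ which, in normalized form, is precisely $q(\,\cdot\,)/\sqrt{2\pi}$.

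To begin, write $Z=\mu+\sigma W$ where $W\sim \calN(0,1)$, $\mu\define K_{L^{(n)}}'(t)$, and $\sigma\define \sqrt{K_{L^{(n)}}''(t)}$, so that $\gamma=(\mu-\varepsilon)/\sigma$. The indicator $\{Z>\varepsilon\}$ becomes $\{W>-\gamma\}$, and the bracketed expectation splits as
\begin{equation}
\bbE\bigl[e^{-t(Z-\varepsilon)}(1-e^{-(Z-\varepsilon)})^+\bigr]
=\bbE\bigl[e^{-t(Z-\varepsilon)}\mathbf{1}_{Z>\varepsilon}\bigr]
-\bbE\bigl[e^{-(t+1)(Z-\varepsilon)}\mathbf{1}_{Z>\varepsilon}\bigr].
\end{equation}
It therefore suffices to evaluate, for a generic $a>0$,
\begin{equation}
I(a)\define \bbE\bigl[e^{-a(Z-\varepsilon)}\mathbf{1}_{Z>\varepsilon}\bigr]
=e^{-a(\mu-\varepsilon)}\int_{-\gamma}^{\infty}\frac{1}{\sqrt{2\pi}}\,e^{-a\sigma w-w^2/2}\,dw.
\end{equation}

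Completing the square via $a\sigma w+w^2/2=\tfrac{1}{2}(w+a\sigma)^2-\tfrac{1}{2}a^2\sigma^2$ and substituting $v=w+a\sigma$ gives
\begin{equation}
I(a)=e^{-a(\mu-\varepsilon)+a^2\sigma^2/2}\,Q(a\sigma-\gamma).
\end{equation}
Using $\mu-\varepsilon=\gamma\sigma$, the exponent simplifies to $\tfrac{1}{2}(a\sigma-\gamma)^2-\gamma^2/2$, so that
\begin{equation}
I(a)=e^{-\gamma^2/2}\cdot \frac{q(a\sigma-\gamma)}{\sqrt{2\pi}}
\end{equation}
by the very definition of $q$. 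Setting $a=t$ yields $a\sigma-\gamma=\alpha$, and $a=t+1$ yields $a\sigma-\gamma=\beta$. Subtracting, multiplying by $e^{K_{L^{(n)}}(t)-\varepsilon t}$, and invoking \eqref{eq:CLT_approx} yields \eqref{eq:berry_esseen_delta}.

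The only nontrivial step is the bookkeeping for the $\gamma^2/2$ factor, which follows cleanly from $\mu-\varepsilon=\gamma\sigma$; beyond that, the argument is routine completion of the square, so no real obstacle is expected.
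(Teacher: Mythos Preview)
Your proof is correct and follows essentially the same approach as the paper's own proof: both split the clamped exponential into two Gaussian tail integrals, complete the square to reduce each to a $Q$-function evaluation, and then rewrite via the identity $Q(z)=q(z)e^{-z^2/2}/\sqrt{2\pi}$. Your write-up is in fact more explicit than the paper's, which simply asserts the two-term $Q$-function expression for the Gaussian expectation and then applies the $q$-identity without showing the completion-of-the-square step.
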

\begin{proof}
See Appendix~\ref{app:gaussian_computation}.
\end{proof}

Next, we study the rate of decay of $\errt$ as the number of compositions $n$ grows without bound, and when the tilt is chosen to be the {\sdp} $t_0$.

\subsection{Decay Rate of the Approximation Error for the Saddle-point Choice of Tilt}

We show that the error rate in approximating $\delta_{L^{(n)}}$ by $\dCL{L^{(n)}}$ decays as $1/\sqrt{n}$. Further, we characterize the constant term in this decay rate. To illustrate the advantage of our approach, we compare such a decay rate with what might be achieved applying CLT directly to approximate $\delta_{L^{(n)}}$ without performing exponential tilting on $L^{(n)}$ beforehand.

The results of this section hold under the following assumption, which we assume throughout.

\begin{assumption} \label{assumption:KL V P}
There are constants $\KL,\V>0$, and $\uP$ such that, for any $t=o(n^{-1/3})$, we have the limit $(\BE[\tilL^{(n)}],\sigma_{\tilL^{(n)}}^2,\uP_t^{(n)})/n \to (\KL,\V,\uP)$ as $n\to \infty$.
\end{assumption}
\begin{remark}
Equivalently, the above assumption postulates that 
\begin{align}
    \frac{K_{L_1}'(c/\sqrt{n}) + \cdots + K_{L_n}'(c/\sqrt{n})}{n} &\to \KL \\
    \frac{K_{L_1}''(c/\sqrt{n}) + \cdots + K_{L_n}''(c/\sqrt{n})}{n} &\to \V \\
    \frac{\BE[ |\tilde{L}_1 - \BE[K_{L_1}'(c/\sqrt{n})] |^3 ] + \cdots + \BE[ |\tilde{L}_n - \BE[K_{L_n}'(c/\sqrt{n})] |^3 ]}{n} &\to \uP, \label{ass:P}
\end{align}
where each $\tilL_j$ in~\eqref{ass:P} is a tilted version of $L_j$ by $t=c/\sqrt{n}$, and $c=o(n^{1/6})$. If $\calM^{(n)}$ is an $n$-fold composition (i.e., the $\calM_j$ are identical), then the above conditions are immediately satisfied, with $\KL=K_{L_1}'(0)=\BE[L_1]$, $\V=K_{L_1}''(0)=\sigma_{L_1}^2$, and $\uP = \BE[|L_1-\KL|^3]$.
\end{remark}

Our main error decay-rate result, stated in Theorem~\ref{thm:err} below, shows that the error in approximating $\delta_{L^{(n)}}(\veps)$ by $\dCL{L^{(n)}}(\eps)$ decays roughly at least as fast as $1/(\sqrt{n}e^{b^2/2})$ for ``interesting'' values of $\veps$. More precisely, we only consider $\veps = \BE[L^{(n)}]+b\sigma_{L^{(n)}}$ for $0<b\ll n^{1/6}$. We explain this choice of regime below in Theorem~\ref{thm:eps}, where we show that $\delta_{L^{(n)}}(\BE[L^{(n)}]-\Phi^{-1}(\delta)\sigma_{L^{(n)}}) \to \delta$ for any fixed level $\delta \in (0,1)$; e.g., the value of $\delta_{L^{(n)}}(\veps)$ is close to $10^{-10}$ if and only if the value of $\veps$ is around $\BE[L^{(n)}]+6.4\sigma_{L^{(n)}}$ for all large $n$. Thus, if one hopes to have a small value of $\delta$, the only viable values of $\veps$, in the regime of high $n$, are those that deviate from $\BE[L^{(n)}]$ by a small multiple of $\sigma_{L^{(n)}}$.

\paragraph{Error decay rate.} The decay rate of $\errt$ is characterized in the following theorem. Recall that the {\sdp} is the unique $t_0>0$ satisfying
\begin{equation}
    K_{L^{(n)}}'(t_0)=\epsilon+\frac{1}{t_0}+\frac{1}{t_0+1}.
\end{equation}

\begin{theorem} \label{thm:err}
For $\veps = \BE[L^{(n)}] + b \sigma_{L^{(n)}}$, where $b>0$ satisfies $b=o(n^{1/6})$, with the choice of tilt being the {\sdp} $t_0$, we have the asymptotic
\begin{equation}
    \errt(\veps) \sim \frac{1.12 \sqrt{e} \ \uP}{\V^{3/2} \cdot C(b)^\tau \cdot \sqrt{n}},
\end{equation}
where $\tau < 1$ satisfies $\tau \to 1$, and we define the term $C(b)$ by
\begin{equation}
    C(b) \define \exp\left( (b^2+b\sqrt{b^2+4})/4 \right).
\end{equation}
\end{theorem}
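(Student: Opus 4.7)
The plan is to evaluate each of the three factors in
$\errt(\eps) = \exp(K_{L^{(n)}}(t_0)-\eps t_0)\cdot \frac{t_0^{t_0}}{(1+t_0)^{1+t_0}}\cdot \frac{1.12\,\uP_{t_0}^{(n)}}{K_{L^{(n)}}''(t_0)^{3/2}}$
asymptotically in $n$, under the chosen $\eps=\BE[L^{(n)}]+b\sigma_{L^{(n)}}$ and the saddle-point tilt $t_0$. Throughout, I would invoke Assumption~\ref{assumption:KL V P}, which gives $(\BE[\tilL^{(n)}], \sigma_{\tilL^{(n)}}^2, \uP_t^{(n)})/n\to (\KL,\V,\uP)$ provided $t=o(n^{-1/3})$. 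Since $b=o(n^{1/6})$, the relevant saddle-point $t_0$ will lie in this regime.

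\emph{Step 1 (locate $t_0$).} From the Taylor expansion $K_{L^{(n)}}'(t)=n\KL+tn\V+O(nt^2)$ together with $\eps\sim n\KL+b\sqrt{n\V}$, substituting into~\eqref{eq:tau0} and setting $t_0=c/\sqrt{n}$ reduces the defining relation to $c^2\V-bc\sqrt{\V}-1=o(1)$. Solving the quadratic,
\begin{equation}
t_0\sim\frac{c(b)}{\sqrt{n}},\qquad c(b)\define\frac{b+\sqrt{b^2+4}}{2\sqrt{\V}},
\end{equation}
which indeed satisfies $t_0=o(n^{-1/3})$ since $b=o(n^{1/6})$, so Assumption~\ref{assumption:KL V P} applies at $t=t_0$.

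\emph{Step 2 (exponent).} Taylor-expand $K_{L^{(n)}}(t_0)=n\KL t_0+\tfrac12 n\V t_0^2+O(nt_0^3)$ and combine with $\eps t_0=n\KL t_0+b\sqrt{n\V}\,t_0$. The $n\KL t_0$ terms cancel, and with $d\define c(b)\sqrt{\V}$ satisfying $d^2-bd=1$ by construction, the remaining terms give
\begin{equation}
K_{L^{(n)}}(t_0)-\eps t_0=\tfrac{d^2}{2}-bd+o(1)=\tfrac12-\tfrac{bd}{2}+o(1)=\tfrac12-\log C(b)+o(1),
\end{equation}
using $bd/2=(b^2+b\sqrt{b^2+4})/4=\log C(b)$. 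The cubic remainder is $O(nt_0^3)=O(b^3/\sqrt{n})=o(1)$. Exponentiating produces $\exp(K_{L^{(n)}}(t_0)-\eps t_0)\sim \sqrt{e}/C(b)^{\tau}$ for some $\tau\to 1$ (the $\tau$ absorbs the fact that the $o(1)$ remainder, when $b$ is allowed to grow, need not vanish as a power of $C(b)^{-1}$ with unit exponent — only asymptotically so).

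\emph{Step 3 (remaining factors).} For the Bernoulli-type factor, $\log\bigl(t_0^{t_0}/(1+t_0)^{1+t_0}\bigr)=t_0\log t_0-(1+t_0)\log(1+t_0)=t_0(\log t_0-1)+O(t_0^2)$, which tends to $0$ since $t_0=O(b/\sqrt{n})$ and $b=o(n^{1/6})$; hence this factor $\to 1$. For the last factor, Assumption~\ref{assumption:KL V P} evaluated at $t=t_0=o(n^{-1/3})$ gives $K_{L^{(n)}}''(t_0)/n\to\V$ and $\uP_{t_0}^{(n)}/n\to\uP$, so
\begin{equation}
\frac{1.12\,\uP_{t_0}^{(n)}}{K_{L^{(n)}}''(t_0)^{3/2}}\sim \frac{1.12\,n\uP}{(n\V)^{3/2}}=\frac{1.12\,\uP}{\V^{3/2}\sqrt{n}}.
\end{equation}
Multiplying the three asymptotics yields the stated result.

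\emph{Main obstacle.} The routine portion is the Taylor bookkeeping; the delicate part is making the estimates uniform in the (possibly growing) parameter $b$. Specifically, one must verify that the cubic Taylor remainder $O(nt_0^3)=O(b^3/\sqrt{n})$ in Step~2, the $O(t_0^2)$ term in Step~3, and the replacement $\eps\sim n\KL+b\sqrt{n\V}$ all remain $o(1)$ throughout the range $b=o(n^{1/6})$ — and, moreover, that they can be absorbed into a multiplicative factor of the form $C(b)^{1-\tau}$ with $\tau\to 1$. This is why the theorem is phrased with $C(b)^\tau$ rather than with $C(b)$ exactly: the additive $o(1)$ in the exponent converts, under the substitution $e^{o(1)}=C(b)^{-(1-\tau)}$, into a slightly perturbed power of $C(b)$ whenever $C(b)\to\infty$ (i.e., when $b\to\infty$), while collapsing to equality with $\tau=1$ if $b$ stays bounded.
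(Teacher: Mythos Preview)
Your overall strategy matches the paper's proof exactly: locate $t_0$ via Proposition~\ref{lem:t}, Taylor-expand the exponent, and use Assumption~\ref{assumption:KL V P} for the remaining factors. Steps~1 and~3 are fine. However, Step~2 contains a gap that your ``Main obstacle'' paragraph gestures at but does not resolve.

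You evaluate the exponent at the \emph{limiting} value $d=(b+\sqrt{b^2+4})/2$ and claim $K_{L^{(n)}}(t_0)-\eps t_0=\tfrac12-\log C(b)+o(1)$. But the Taylor expansion gives the exponent in terms of the \emph{actual} value $d_0\define t_0\sigma_{L^{(n)}}$, and the discrepancy
\[
\left(\tfrac{d_0^2}{2}-bd_0\right)-\left(\tfrac{d^2}{2}-bd\right)=\tfrac{\tau_0^2-1}{2}+(\tau_0-1)^2\log C(b),\qquad \tau_0\define d_0/d,
\]
contains a term $(\tau_0-1)^2\log C(b)$. When $b\to\infty$ this is $o(1)$ only if $\tau_0-1=o(1/b)$, a rate that Proposition~\ref{lem:t} does not supply. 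Your proposed fix of writing $e^{o(1)}=C(b)^{-(1-\tau)}$ also breaks down at the other extreme $b\to 0$, where $\log C(b)\to 0$ and the implied $\tau$ need not converge.

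The paper sidesteps both issues by never replacing $d_0$ with its limit. Instead it writes $d_0=\tau_0 d$ \emph{exactly}, uses $d^2=bd+1$ to compute
\[
\tfrac{d_0^2}{2}-bd_0=\tfrac{\tau_0^2}{2}-(2-\tau_0)\tau_0\,\log C(b)
\]
identically, and then \emph{defines} $\tau\define(2-\tau_0)\tau_0$. The asymptotic $\sim$ in the final statement absorbs only the genuinely $o(1)$ pieces: the cubic Taylor remainder $d_0^3K'''(\xi)/(6\sigma_{L^{(n)}}^3)$ (bounded via $|K'''(\xi)|\le \uP_\xi^{(n)}=O(n)$) and the convergence $e^{\tau_0^2/2}\to\sqrt{e}$. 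Thus $\tau$ is an explicit function of the saddle-point location, not a post-hoc absorption of a remainder, and no rate on $\tau_0\to 1$ is needed.
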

\begin{proof}
See Appendix~\ref{app:err}.
\end{proof}
\begin{remark}
In fact, the term $\tau$ can be given in terms of the {\sdp} $t_0$ as follows. Writing the finitary version of the asymptotic shown in Proposition~\ref{lem:t} below as
\begin{equation}
    t_0 = \tau_0 \cdot \frac{b+\sqrt{b^2+4}}{2\sigma_{L^{(n)}}},
\end{equation}
where $\tau_0>0$ is such that $\tau_0\to 1$, we have that
\begin{equation}
    \tau = (2-\tau_0)\tau_0.
\end{equation}
\end{remark}

\paragraph{Comparison with standard CLT.} To illustrate the advantage of our tilting approach, we compare the asymptotic behavior of the error in Theorem~\ref{thm:err} to that obtainable from non-tilted Berry-Esseen. By the Berry-Esseen theorem, we have for $Z\sim \calN(\BE[L^{(n)}], \sigma_{L^{(n)}}^2)$ that\footnote{Note that $Z$ is not necessarily a PLRV associated to a Gaussian mechanism, since in general $\sigma_{L^{(n)}}^2 \neq 2\BE[L^{(n)}]$.}
\begin{align} 
    \delta_{L^{(n)}}(\varepsilon) &= \BE\left[ \left( 1 - e^{-(L^{(n)}-\varepsilon)} \right)^+ \right] = \int_0^1 \BP\left[ L^{(n)} > \varepsilon - \log(1-u) \right] \, du = \delta_{Z}(\varepsilon) + \theta \cdot  \frac{0.56 \ \uP_0^{(n)}}{ \sigma_{L^{(n)}}^{3/2} } \label{eq:BE no tilt}
\end{align}
where $|\theta|\le 1$. Under our setup (in particular, Assumption~\ref{assumption:KL V P}), the error term in the standard Berry-Esseen approach shown above satisfies
\begin{equation}
    \mathrm{err}_{\text{Standard}}(\veps) \define  \frac{0.56 \ \uP_0^{(n)}}{ \sigma_{L^{(n)}}^{3/2} } \sim \frac{0.56 \ \uP}{\V^{3/2} \cdot \sqrt{n}}.
\end{equation}
Thus, the improvement our approach yields is asymptotically 
\begin{equation}
    \frac{\errt(\veps)}{\mathrm{err}_{\text{Standard}}(\veps)} \sim \frac{2\sqrt{e}}{C(b)^\tau}.
\end{equation}
Even for moderate values of $b$, the above ratio is very small (recall that we denote $\veps = \BE[L^{(n)}] + b \sigma_{L^{(n)}}$). For example, if $b\approx 6.4$ (so $\delta \approx 10^{-10}$ in the limit; see Theorem~\ref{thm:eps} below), we obtain the limit of the ratio
\begin{equation}
    \lim_{n\to \infty} \ \frac{\errt(\veps)}{\mathrm{err}_{\text{Standard}}(\veps)} \approx 3\times 10^{-9}.
\end{equation}

In addition, in the complementary regime of $\delta \to 0$, e.g., when $\eps = \BE[L^{(n)}]+b\sigma_{L^{(n)}}$ with $b \ge \sqrt{\log n}$ (and still $b=o(n^{1/6})$), one has that the error term in the standard CLT \emph{dominates} the approximation of $\delta$:
\begin{equation}
    \delta_Z(\eps) = o\left(\mathrm{err}_{\text{Standard}}(\veps)\right).
\end{equation}
In contrast, in the same regime, our error term $\errt(\eps)$ is always smaller than the approximation itself, i.e., 
\begin{equation}
    \errt(\eps) = o\left( \dCL{L^{(n)}}(\eps) \right).
\end{equation}

\paragraph{Saddle-point asymptotic.} The essential ingredient for the proof of Theorem~\ref{thm:err} is the following characterization of the {\sdp}, which could be of independent interest.

\begin{proposition} \label{lem:t}
For $\veps = \BE[L^{(n)}] + b \sigma_{L^{(n)}}$, where $b>0$ satisfies $b=o(n^{1/6})$,  the {\sdp} satisfies the asymptotic relation
\begin{equation}
    t_0 \sim \frac{b+\sqrt{b^2+4}}{2\sigma_{L^{(n)}}}.
\end{equation}
\end{proposition}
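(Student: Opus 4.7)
The plan is to substitute the ansatz $t_0 = c_n / \sigma_{L^{(n)}}$ into the saddle-point equation and show that the positive scalar $c_n$ converges to $c_\star(b) := (b+\sqrt{b^2+4})/2$, which is the unique positive root of $c^2 - bc - 1 = 0$. Writing $\sigma := \sigma_{L^{(n)}}$ for brevity, rearrange the saddle-point equation~\eqref{eq:tau0} as
\begin{equation}
K_{L^{(n)}}'(t_0) - \BE[L^{(n)}] \;=\; b\,\sigma \;+\; \frac{1}{t_0} \;+\; \frac{1}{t_0+1},
\end{equation}
and apply the Taylor expansion $K_{L^{(n)}}'(t_0) - \BE[L^{(n)}] = t_0 \, K_{L^{(n)}}''(\xi)$ for some $\xi \in (0,t_0)$. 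Because $b = o(n^{1/6})$ and $\sigma \sim \sqrt{n\V}$, a self-consistent choice of $c_n$ forces $t_0 = O(b/\sigma) = o(n^{1/6}/\sqrt{n}) = o(n^{-1/3})$, so by Assumption~\ref{assumption:KL V P} we have $K_{L^{(n)}}''(\xi) = \sigma^2(1+o(1))$.

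Plugging in $t_0 = c_n/\sigma$ and dividing by $\sigma$ transforms the equation into
\begin{equation}
c_n\,(1+o(1)) \;=\; b \;+\; \frac{1}{c_n} \;+\; \frac{1}{c_n + \sigma},
\end{equation}
and since $c_n = o(\sqrt{n}) = o(\sigma)$ (by the same self-consistency), the last term is $o(1)$. Thus $c_n$ satisfies
\begin{equation}
c_n^2 \;-\; b\,c_n \;-\; 1 \;=\; o(c_n^2) + o(c_n).
\end{equation}
Comparing with the limiting equation $c_\star^2 - b c_\star - 1 = 0$ and using the factorization $(c_n^2 - b c_n - 1) - (c_\star^2 - b c_\star - 1) = (c_n - c_\star)(c_n + c_\star - b)$, together with the lower bound $c_n + c_\star - b \ge c_\star - b/2 + c_n - b/2 \ge \sqrt{b^2+4} \ge 2$ (valid once $c_n \ge b/2$, which follows because $c_n$ is forced to exceed $b$ asymptotically), one obtains $|c_n - c_\star|/c_\star = o(1)$, i.e., the desired asymptotic $t_0 \sim c_\star(b)/\sigma$.

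\textbf{Anticipated obstacle.} The main technical nuisance is that $b$ is allowed to grow with $n$, so both $c_n$ and $c_\star(b)$ may diverge, and the relative-error control must survive this growth. Concretely, one must ensure that the Taylor remainder in $K_{L^{(n)}}'$ contributes an \emph{additive} error of size $o(c_n)$ after dividing by $\sigma$, and that the ``boundary'' term $1/(t_0+1)$ remains negligible uniformly in the range $b = o(n^{1/6})$. The former is guaranteed by the hypothesis $t_0 = o(n^{-1/3})$ in Assumption~\ref{assumption:KL V P}, and the latter follows from $c_n/\sigma = o(1)$. A secondary subtlety is establishing existence and sign of $c_n$: since $F_\varepsilon|_{\BR}$ is strictly convex with unique minimizer $t_0 > 0$ (as noted after~\eqref{eq:tau0}, using $\varepsilon < \esup L$), the scalar $c_n = t_0 \sigma$ is well-defined and positive, so the fixed-point argument above identifies it unambiguously with the branch $c_\star(b) = (b+\sqrt{b^2+4})/2$ rather than the negative root.
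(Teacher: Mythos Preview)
Your overall strategy—reparametrize $t_0 = c_n/\sigma_{L^{(n)}}$ and reduce the saddle-point equation to (a perturbation of) the quadratic $c^2 - bc - 1 = 0$—is exactly the paper's approach. However, there is a genuine gap in your execution: the a~priori smallness of $t_0$ is never established, only assumed via ``self-consistency.''

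Concretely, you write that ``a self-consistent choice of $c_n$ forces $t_0 = O(b/\sigma) = o(n^{-1/3})$,'' and later that ``the former is guaranteed by the hypothesis $t_0 = o(n^{-1/3})$ in Assumption~\ref{assumption:KL V P}.'' But Assumption~\ref{assumption:KL V P} does \emph{not} hypothesize that the saddle-point satisfies $t_0 = o(n^{-1/3})$; it only asserts that the cumulant limits hold \emph{for any} $t = o(n^{-1/3})$. You must first prove that $t_0$ lands in this regime before you may invoke $K_{L^{(n)}}''(\xi) = \sigma^2(1+o(1))$. Without this, every subsequent $o(1)$ in your display is unjustified, and the argument is circular. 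The same circularity appears when you assert $c_n = o(\sqrt{n})$ ``by the same self-consistency'' in order to discard the $1/(c_n+\sigma)$ term.

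The paper closes this gap in two steps. First, it shows $t_0 \to 0$ by contradiction: if $\limsup t_0 = t^\ast > 0$, then dividing the saddle-point equation by $\varepsilon$ and using $(\BE[L^{(n)}],\sigma_{L^{(n)}}^2) \sim n(\KL,\V)$ forces an impossible equality. Second, with $t_0 \to 0$ in hand (so that $\rho_2(t_0) := (K'(t_0)-\BE[L^{(n)}])/(t_0\sigma^2) \to 1$), the quadratic form of the saddle-point equation yields the crude bound $\tfrac12 d_0^2 - (b+1)d_0 - 1 \le 0$, whence $d_0 = O(b) = o(n^{1/6})$ and therefore $t_0 = d_0/\sigma = o(n^{-1/3})$. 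Only then can the precise asymptotic be extracted. Your sketch is missing both of these preliminary bounds.

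A minor additional issue: your lower bound $c_n + c_\star - b \ge \sqrt{b^2+4}$ is off by a factor of $2$ (note $c_\star - b/2 = \tfrac12\sqrt{b^2+4}$, and you would also need $c_n \ge b/2$, not merely $c_n > 0$). More importantly, since $b$ may grow with $n$, the quantity $c_n + c_\star - b$ is not bounded below by a universal constant but rather by something of order $c_\star$; you need to track this to conclude the \emph{relative} error $|c_n - c_\star|/c_\star \to 0$. The paper sidesteps this by solving the perturbed quadratic explicitly (completing the square) rather than factoring the difference.
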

\begin{proof}
See Appendix~\ref{app:t}.
\end{proof}

\paragraph{The high-composition regime.} As mentioned before Theorem~\ref{thm:err}, we only consider values of $\veps$ that deviate from $\BE[L^{(n)}]$ by a small multiple of $\sigma_{L^{(n)}}$ since that is necessary for $\delta_{L^{(n)}}(\veps)$ to be guaranteed to converge to a reasonable value of $\delta$. The following result shows this fact formally.

\begin{theorem} \label{thm:eps}
For any $\delta \in (0,1/2)$, we have that
\begin{equation}
    \delta_{L^{(n)}}(\BE[L^{(n)}] - \Phi^{-1}(\delta) \sigma_{L^{(n)}}) \to \delta,
\end{equation}
i.e.,
\begin{equation}
    \frac{\veps_{L^{(n)}}(\delta) - \BE[L^{(n)}]}{\sigma_{L^{(n)}}} \to - \Phi^{-1}(\delta).
\end{equation}
\end{theorem}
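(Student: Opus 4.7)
The plan is to reduce the first displayed limit to a central limit theorem for $L^{(n)}$ via standardization, and then derive the statement for the inverse curve by monotonicity. Set $\veps_n \define \BE[L^{(n)}]-\Phi^{-1}(\delta)\sigma_{L^{(n)}}$, $a \define -\Phi^{-1}(\delta)>0$ (which is positive since $\delta<1/2$), and standardize $\tilde{Z}_n \define (L^{(n)}-\BE[L^{(n)}])/\sigma_{L^{(n)}}$, so that $L^{(n)}-\veps_n = \sigma_{L^{(n)}}(\tilde{Z}_n-a)$ and
\[
    \delta_{L^{(n)}}(\veps_n) = \BE[h_n(\tilde{Z}_n)], \qquad h_n(z) \define \bigl(1-e^{-\sigma_{L^{(n)}}(z-a)}\bigr)^+ \in [0,1].
\]
Assumption~\ref{assumption:KL V P} at $t=0$ gives $\sigma_{L^{(n)}}^2/n \to \V > 0$ and $\uP_0^{(n)}/n \to \uP$, so the Lyapunov ratio $\uP_0^{(n)}/\sigma_{L^{(n)}}^3 = O(n^{-1/2})$ vanishes and $\tilde{Z}_n \convd Z \sim \calN(0,1)$, while $\sigma_{L^{(n)}} \to \infty$.

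Next I would invoke the Skorokhod representation theorem to pass to a.s.\ convergent versions $\tilde{Z}_n \to Z$. On $\{Z>a\}$, eventually $\tilde{Z}_n > (a+Z)/2$, whence $\sigma_{L^{(n)}}(\tilde{Z}_n-a) \ge \sigma_{L^{(n)}}(Z-a)/2 \to \infty$ and $h_n(\tilde{Z}_n) \to 1$; on $\{Z<a\}$, eventually $\tilde{Z}_n<a$, so $h_n(\tilde{Z}_n) = 0$. Since $\BP[Z=a]=0$, one has $h_n(\tilde{Z}_n) \to \mathbf{1}\{Z>a\}$ a.s., and bounded convergence yields $\delta_{L^{(n)}}(\veps_n) \to \BP[Z>a]=\Phi(-a)=\delta$. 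For the inverse statement, $\delta_{L^{(n)}}$ is continuous and non-increasing in $\veps$ (continuity via dominated convergence in its definition). For any small $\eta>0$ with $\delta\pm\eta \in (0,1/2)$, applying the first convergence to $\delta\pm\eta$ gives, for all large $n$,
\[
    \delta_{L^{(n)}}\bigl(\BE[L^{(n)}]-\Phi^{-1}(\delta-\eta)\sigma_{L^{(n)}}\bigr) < \delta < \delta_{L^{(n)}}\bigl(\BE[L^{(n)}]-\Phi^{-1}(\delta+\eta)\sigma_{L^{(n)}}\bigr).
\]
Monotonicity and the strict increase of $\Phi^{-1}$ then sandwich $(\veps_{L^{(n)}}(\delta)-\BE[L^{(n)}])/\sigma_{L^{(n)}}$ between $-\Phi^{-1}(\delta+\eta)$ and $-\Phi^{-1}(\delta-\eta)$; letting $\eta\downarrow 0$ finishes the proof.

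The main obstacle is that $h_n$ depends on $n$ and converges to the discontinuous indicator $\mathbf{1}\{z>a\}$, so one cannot directly appeal to the Portmanteau theorem for the weakly convergent sequence $\tilde{Z}_n$. Skorokhod plus dominated convergence sidesteps this cleanly by transferring the problem to pointwise a.s.\ limits. A purely quantitative alternative would split the event $\{\tilde{Z}_n>a\}$ at level $a+n^{-1/4}$, bounding the slice $(a,a+n^{-1/4}]$ via Berry--Esseen (cf.~\eqref{eq:BE no tilt}) and the tail integrand by $e^{-\sigma_{L^{(n)}}n^{-1/4}} \le e^{-cn^{1/4}}$, yielding the same conclusion.
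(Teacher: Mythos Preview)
Your proof is correct, but it follows a different path from the paper's. The paper works directly with the Gaussian surrogate $Z\sim\calN(\BE[L^{(n)}],\sigma_{L^{(n)}}^2)$: it writes $\delta_{L^{(n)}}(\veps)=\int_0^1 \BP[L^{(n)}>\veps-\log(1-u)]\,du$, replaces each tail by the Gaussian one via Berry--Esseen (this is exactly~\eqref{eq:BE no tilt}), then evaluates $\delta_Z(\veps)$ in closed form and reads off the limit $\delta$ from the explicit expression. You instead standardize, pull the $n$-dependence into the integrand $h_n$, and use Skorokhod plus bounded convergence to push through the limit to the indicator $\mathbf{1}\{Z>a\}$. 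Your route avoids computing the Gaussian privacy curve explicitly, at the cost of invoking Skorokhod; the paper's route is shorter and immediately quantitative (the $O(n^{-1/2})$ error is visible), which is why they chose it given that~\eqref{eq:BE no tilt} was already set up for the comparison with $\errt$. You also spell out the passage to the inverse curve via the $\eta$-sandwich, which the paper leaves implicit in the ``i.e.''\ of the statement; that part of your write-up is a genuine addition. Finally, the ``purely quantitative alternative'' you sketch at the end is essentially the paper's argument.
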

\begin{proof}
We may show this result using the standard Berry-Esseen approach, as in~\eqref{eq:BE no tilt}. A direct computation yields that, for any $\veps\ge \BE[L^{(n)}]$, with $Z\sim \calN(\BE[L^{(n)}],\sigma_{L^{(n)}}^2)$,
\begin{align}
    \delta_{Z}(\varepsilon) &= \Phi\left( \frac{\BE[L^{(n)}] - \varepsilon}{\sigma_{L^{(n)}}} \right) - e^{\varepsilon - \BE[L^{(n)}] + \sigma_{L^{(n)}}^2/2} \ \Phi\left( \frac{\BE[L^{(n)}] - \sigma_{L^{(n)}}^2 - \varepsilon}{\sigma_{L^{(n)}}} \right).
\end{align}
Taking $\veps = \BE[L^{(n)}] - \Phi^{-1}(\delta) \sigma_{L^{(n)}}$ yields the limit $\delta_Z(\veps) \to \delta$. By~\eqref{eq:BE no tilt}, we obtain $\delta_{L^{(n)}}(\veps) \to \delta$ too.
\end{proof}
\begin{remark}
This result shows, in particular, that the mean and variance for a PLRV are unique for a fixed mechanism $\calM$ characterized by $L$. In other words, if $L$ and $L'$ are both PLRVs for $\calM$ such that $\delta_{\calM}=\delta_L = \delta_{L'}$, then we must have that $\BE[L]=\BE[L']$ and $\sigma_L=\sigma_{L'}$. This fact was first observed in~\cite{sommer2019privacy}.  
\end{remark}

\section{Experiments}\label{sec:numerical}

We benchmark SPA against state-of-the-art accounting methods via numerical experiments. We focus on accounting for privacy of the DP-SGD algorithm~\cite{Abadi_MomentAccountant} under composition. In particular, the mechanism we are accounting for is the $n$-fold composition $\calM_{\lambda}^{\circ n}$, where $\calM_\lambda$ is the $\lambda$-subsampled Gaussian mechanism. 

We will either fix $\delta$ and approximate $\eps$ as a function of $n$, or fix $n$ and approximate $\delta$ as a function of $\eps$. These two experiments, in short, provide evidence that SPA outperforms other constant-time composition accountants while achieving comparable accuracy to the state-of-art FFT accountant. While the large-deviation based moments accountant overreports the privacy budget, and the CLT-based GDP accountant underreports the privacy budget, SPA combines the two mathematical approaches (large-deviation and CLT) to achieve a reasonable approximation of the privacy budget. 

Further, while the FFT based approach proposed in~\cite{Numerical_compositionDP} can achieve arbitrary accuracy, it does not enjoy the fast time complexity SPA does. Moreover, the necessary discretization of the PLRV in the FFT approach makes floating point errors a central concern when $\delta <  10^{-15} \times \text{discretization points}$. In our experiments, the number of discretization points is on the order of $10^{4}$, meaning $\delta$ cannot be computed below roughly $10^{-11}$. In contrast, the simplicity of the SPA allows us to straightforwardly implement a floating point stable algorithm.

\begin{algorithm}[!t]
   \caption{\textbf{:} \textsc{SaddlePointAccountant} (SPA)}
   \label{alg:main}
\begin{algorithmic}[1]
   \State {\bfseries Input:} Pairs of tightly dominating distributions $(P_1,Q_1), \ldots, (P_n,Q_n)$ for mechanisms $\calM_1,\cdots,\calM_n$, and a finite set $\calE \subset [0,\infty)$ (values of $\eps$).
   \State {\bfseries Output:} Three approximations $ \dSDa{L^{(n)}}, \dSDb{L^{(n)}}$, and $\dCL{L^{(n)}}$ of the composition curve $\bar{\delta}_{\calM_1\circ \cdots \circ \calM_n}$, and an error bound  so that $|\bar{\delta}_{\calM_1\circ \cdots \circ \calM_n}(\eps)-\dCL{L^{(n)}}(\eps)|\le \errt(\eps)$.
   \vspace{2mm}
   \State $L_j \leftarrow \log \frac{\textup{d} P_j}{\textup{d} Q_j}(X_j)$ where $X_j \sim P_j$ $\hfill~ j\in [n]$
   \vspace{0.5mm}
   \State $K_{L_j}(t) \leftarrow \log \BE\left[ e^{tL_j} \right] \hfill~ j\in [n]$
   \vspace{0.5mm}
   \State $L^{(n)} \leftarrow L_1+\cdots+L_n$
   \vspace{0.5mm}
   \State $K_{L^{(n)}} \leftarrow K_{L_1} + \cdots + K_{L_n}$
   \vspace{1.5mm}
   \For{$\eps \in \calE$}
   \State $t_0 \leftarrow$ the unique positive solution to ${\displaystyle K'_{L^{(n)}}(t_0) = \eps + \frac{1}{t_0} + \frac{1}{t_0+1}}$
   \vspace{1mm}
   \State $F_\epsilon(t) \leftarrow K_{L^{(n)}} (t) - \epsilon t - \log t - \log (t+1)$
   \vspace{1mm}
   \State ${\displaystyle \dSDa{L^{(n)}}(\eps) \leftarrow \frac{e^{F_\epsilon(t_0)}}{\sqrt{2\pi F''_\epsilon(t_0)}}}$
   \vspace{1mm}
   \State ${\displaystyle \dSDb{L^{(n)}}(\eps) \leftarrow \frac{e^{F_\epsilon(t_0)}}{\sqrt{2\pi F''_\epsilon(t_0)}}\left( 1+\frac{1}{8}\frac{F^{(4)}_\epsilon(t_0)}{F''_\epsilon(t_0)^2}-\frac{5}{24}\frac{F_\epsilon^{(3)}(t_0)^2}{F''_\epsilon(t_0)^3} \right)}$
   \vspace{1mm}
   \State $\left( \gamma, \alpha , \beta \right) \leftarrow \left(  \frac{K_{L^{(n)}}'(t_0)-\varepsilon}{\sqrt{K_{L^{(n)}}''(t_0)}}, \sqrt{K_{L^{(n)}}''(t_0)} \ t_0 - \gamma, \sqrt{K_{L^{(n)}}''(t_0)} \ (t_0+1) - \gamma \right)$
   \vspace{1mm}
   \State ${\displaystyle \dCL{L^{(n)}}(\eps) \leftarrow \exp\left({K_{L^{(n)}}(t_0)-\varepsilon t_0-\gamma^2/2} \right)
    \frac{q(\alpha)/\alpha-q(\beta)/\beta}{\sqrt{2\pi}}}$
    \vspace{1mm}
    \State $\tilL_j \leftarrow$ exponential tilting of $L_j$ with parameter $t_0 \hfill~ j\in [n]$
    \vspace{1mm}
    \State ${\displaystyle \uP_{t_0}^{(n)} \leftarrow \sum_{j\in [n]} \BE\left[ \left| \tilL_j - K_{L_j}'(t_0) \right|^3 \right]}$ 
    \State ${\displaystyle \errt(\eps) \leftarrow \exp\left( K_{L^{(n)}}(t_0) - \eps t_0 \right) \frac{t_0^{t_0}}{(1+t_0)^{1+t_0}} \cdot \frac{1.12 \ \uP_{t_0}^{(n)}}{K_{L^{(n)}}''(t_0)^{3/2}}}$
    \vspace{1mm}
    \EndFor
    \vspace{1mm}
    \State {\bfseries Return:} ${\displaystyle \dSDa{L^{(n)}}, \  \dSDb{L^{(n)}}, \ \dCL{L^{(n)}}, \ \errt}$.
\end{algorithmic}
\end{algorithm}

\begin{algorithm}[!t]
   \caption{\textbf{:} SPA for the subsampled Gaussian mechanism}
   \label{alg:gaussian_algo}
\begin{algorithmic}[1]
   \State {\bfseries Input:} Noise variance $\sigma^2$, $\ell_2$ sensitivity $s$, number of compositions $n$, and subsampling rate $\lambda \in [0,1]$.
   \State {\bfseries Output:} Three approximations $ \dSDa{L^{(n)}}, \dSDb{L^{(n)}}$, and $\dCL{L^{(n)}}$ of the composition curve $\bar{\delta}_{\calM_\lambda^{\circ n}}$ where $\calM_\lambda^{\circ n}$ is the $n$-fold adaptive composition of $\calM_\lambda$, a subsampled Gaussian mechanism with the given parameters.
   \vspace{2mm}
   \State $P_j \leftarrow (1-\lambda) \ \calN(0, \sigma^2) +\lambda \ \calN(s,\sigma^2) \hfill~ j\in[n]$
   \State $Q_j \leftarrow \calN(0, \sigma^2)$ {\hfill~ $j\in [n]$ \ignorespaces}
   \vspace{1mm}
   \State {\bfseries Return:} ${\displaystyle \textsc{SaddlePointAccountant}\left( \{(P_j,Q_j)\}_{j\in [n]} \right) }$
\end{algorithmic}
\end{algorithm}

\paragraph{Experimental setup.} Recall that all accountants presented herein aim at approximating the composition curve $\bar{\delta}_{\calM^{\circ n}}$ (see Definition~\ref{Def:CompositionCurve}). Equivalently, if $L$ is a PLRV for the subsampled Gaussian mechanism $\calM_\lambda$, then $\bar{\delta}_{\calM^{\circ n}} = \delta_{L^{(n)}}$ where $L^{(n)}=L_1+\cdots+L_n$ for i.i.d. $L_1,\cdots,L_n$. By Lemma~\ref{lem:subsampling}, we may take $L_i, i=1,\ldots,n$ to be
\begin{equation}
    L_i = \log \left( 1 - \lambda + \lambda e^{s(2X-s)/(2\sigma^2)} \right), \qquad X\sim (1-\lambda)\calN(0,\sigma^2) + \lambda \calN(s,\sigma^2),
\end{equation}
where $s$ is a prespecified $\ell_2$ sensitivity, and $\sigma^2$ is the variance of the underlying Gaussian mechanism. Without loss of generality, we fix $s=1$. Our experiments are set in either of two scenarios: approximating $\eps \mapsto \delta_{L^{(n)}}(\eps)$ (for fixed $n$), or $n\mapsto \eps_{L^{(n)}}(\delta)$ (for fixed $\delta$), where $\eps_{L^{(n)}}$ is the inverse of $\delta_{L^{(n)}}$ as given by Definition~\ref{def:PLRV}. We plot five different ways of computing the privacy budget: our proposed SPA, three other baselines (explained below), and an almost-exact (yet computationally expensive) computation of the privacy parameters (labeled ``Truth'' in the plots). In addition to them clearly helping in assessing the correctness of the various accountants, the ``Truth'' curves also allow us to zoom in and distinguish the relative error offered by SPA in contrast to the other accountants. The relative error is define as follows: if $\hat{\delta}$ is an estimate of $\delta_{L^{(n)}}(\eps)$, then its relative error is defined as $r=|\hat{\delta}-\delta_{L^{(n)}}(\eps)|/\delta_{L^{(n)}}(\eps)$, so $\hat{\delta} = \delta_{L^{(n)}}(\eps)\left(1+\eta r\right)$ for $\eta\in \{\pm 1\}$; the relative error when approximating $\eps_{L^{(n)}}(\delta)$ is defined similarly. When we say that ``the relative error is $<u\%$,'' we mean that $r<u/100$. 

\paragraph{Truth Curve.} The ``Truth'' curves are computed as follows. To compute $\delta_{L^{(n)}}(\eps)$ for fixed $\eps$ and $n$, we use the integral representation in~\eqref{eq:fourier-tilt}, and perform Gaussian quadrature to 50 digits of precision using the \texttt{mpmath}~\cite{mpmath} package in \texttt{python}; then, for computing $\eps_{L^{(n)}}(\delta)$, we perform a binary search. We reiterate that producing the ``Truth'' curves is computationally very expensive, and we perform it only for sake of producing the relative error curves.

\paragraph{The {\sdp} accountant.} The approximations of $\delta_{L^{(n)}}$ that we propose are given by $\dCL{L^{(n)}}$, $\dSDa{L^{(n)}}$, and $\dSDb{L^{(n)}}$, as given by their formulas in~\eqref{SPA_approx2},~\eqref{SPA_approx1}, and~\eqref{eq:true_saddlepoint_ncomp}, respectively; these are referred to in the plots by the labels ``SP-CLT,'' ``SP-MSD0,'' and ``SP-MSD1,'' respectively. We summarize the procedure for calculating the SPA in Algorithm~\ref{alg:main}. We also instantiate Algorithm~\ref{alg:main} for the DP-SGD setting in Algorithm~\ref{alg:gaussian_algo}, whose validity is guaranteed by Lemma~\ref{lem:subsampling}.

\paragraph{Baselines.} \label{baselines} We compare SPA against three baseline DP accountants: the state-of-the-art FFT-based approach proposed in~\cite{Numerical_compositionDP}; the moments accountant\footnote{We use the latest implementation of the moments accountant. As in \href{https://github.com/google/differential-privacy/blob/main/python/dp_accounting/rdp/rdp_privacy_accountant.py}{this github link}, this implementation uses the improved bounds discussed at the beginning of Section~\ref{sec:ma}.}~\cite{Abadi_MomentAccountant}; and the GDP accountant\footnote{We use the implementation at~\href{https://github.com/tensorflow/privacy/blob/master/tensorflow_privacy/privacy/analysis/gdp_accountant.py}{this github link}.}~\cite{GaussianDP}. We use the implementations for the moments accountant and the GDP accountant exactly as provided in their github repositories. For the FFT accountant in~\cite{Numerical_compositionDP}, we use throughout the experiments the parameters $\eps_{\error}=0.1$, since it is the one used in~\cite{Numerical_compositionDP}, and $\delta_{\error}=10^{-10}$, since it is the minimal order allowed for this parameter as per the discussion in~\cite[Appendix B]{Numerical_compositionDP}.

\begin{figure}[ht]
    \centering
    \begin{minipage}{0.5\textwidth}
        \centering
        \includegraphics[width=1\textwidth]{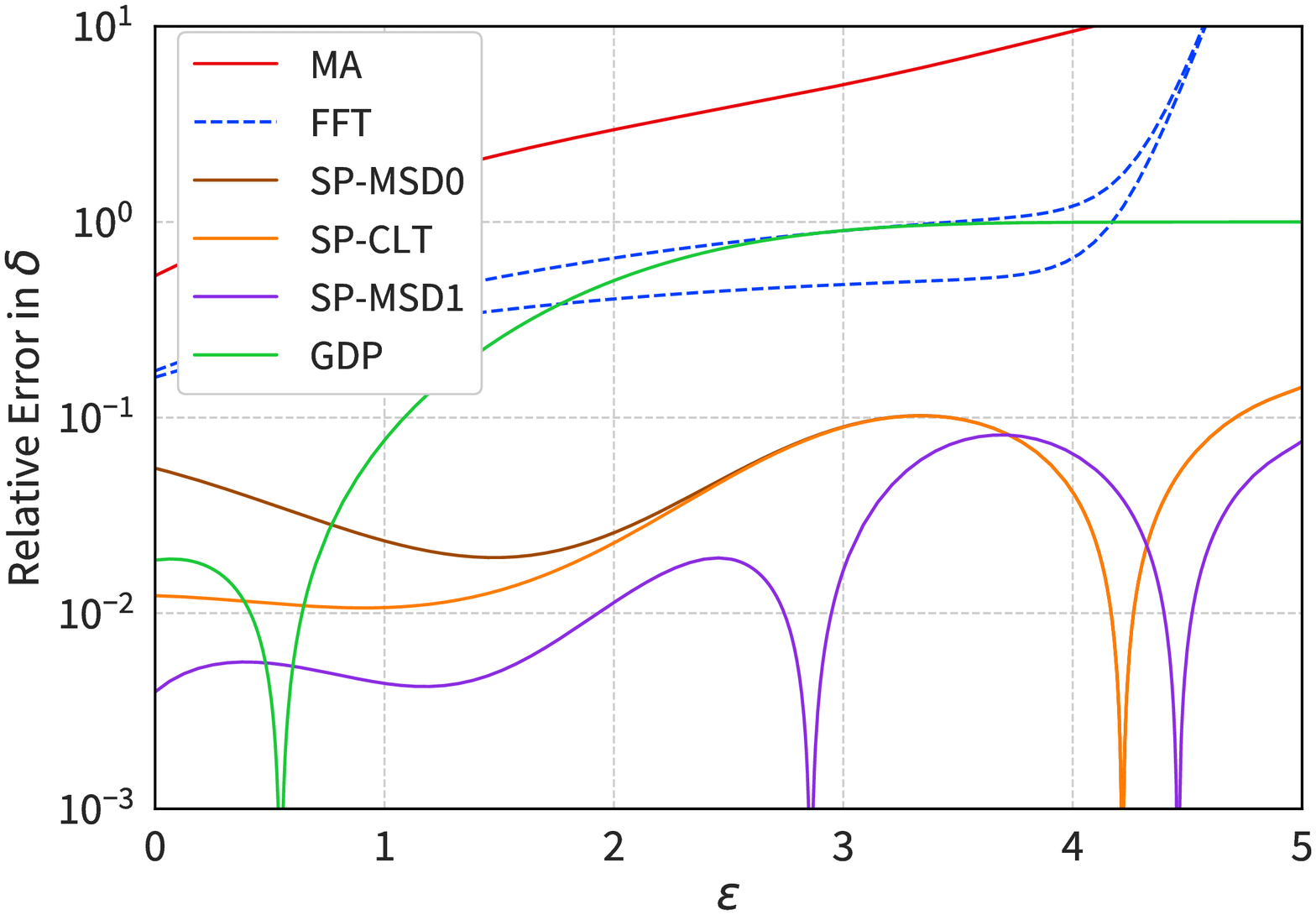}
        \captionsetup
        {width=0.9\textwidth,
        justification=centering}
        \caption*{(a) }
    \end{minipage}\hfill
    \begin{minipage}{0.5\textwidth}
        \centering
        \includegraphics[width=1\textwidth]{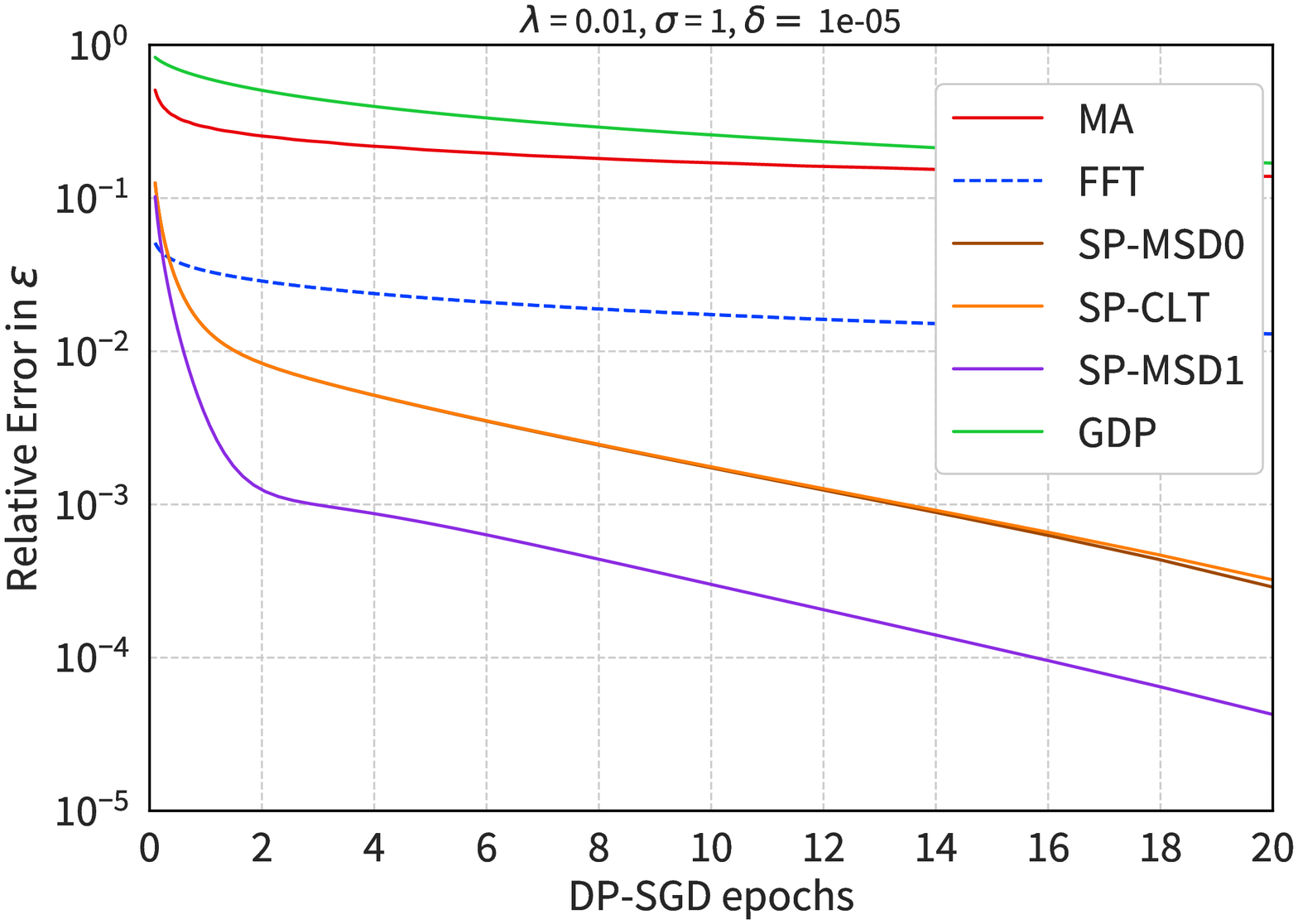} 
        \captionsetup{width=0.9\textwidth,
        justification=centering}
        \caption*{(b) }
    \end{minipage}
 
    \caption{The relative errors of the various accountants in Figure \ref{fig:DP-SGD-exp}. }
    \label{fig:dve}
\end{figure}

\paragraph{Experiment 1.} \label{exp1}
In Figure \ref{fig:DP-SGD-exp}(a), we approximate the privacy curve $\eps \mapsto \delta_{L^{(n)}}(\eps)$, fixing all other parameters. Specifically, we set $(\lambda,\sigma,n)= (10^{-2},1,2000)$. We plot three approximations to $\delta_{L^{(n)}}$ given by \eqref{SPA_approx1}, \eqref{SPA_approx2}, and \eqref{eq:true_saddlepoint_ncomp}, but label them all ``SPA [ours]'' since they are indistinguishable. Comparing against the baselines and ground truth, the {\sdp} approximations are more accurate than moments accountant and GDP while avoiding the floating point errors in the FFT accountant for $\delta \approx 10^{-10}$.

In Figure \ref{fig:DP-SGD-exp}(b), we account for how $\eps_{L^{(n)}}$ grows as a function of $n$, fixing all other parameters. Specifically, we fix a privacy budget parameter $\delta=10^{-5}$,  a subsampling rate $\lambda=10^{-2}$, and a noise standard-deviation $\sigma=0.65$. The number of compositions, $n$, is varied up to $2000$ (we start from $n=10$). We refer to the quantity $n\lambda$ as the number of \emph{epochs}, so it goes up to $20$. We again plot three approximations to $\delta_{L^{(n)}}$ given by \eqref{SPA_approx1}, \eqref{SPA_approx2}, and \eqref{eq:true_saddlepoint_ncomp}, and label them all ``SPA [ours].''

\paragraph{Experiment 2: Relative Error.} \label{exp2}
 We use the same data generated by Experiment~\hyperref[exp1]{1} to create relative-error plots. Specifically, we use our ``Truth'' calculation to illustrate the relative error of the three {\sdp} approximations to the true value. We do this for the baselines too, and the results are plotted in Figure~\ref{fig:dve}. For Figure $\ref{fig:dve}$(a), the relative error is given by $|1-\delta_{\text{acc}}/\delta_{\text{true}}|$, where $\delta_{\text{acc}}$ is from a given accountant, and $\delta_{\text{true}}$ is from ``Truth.'' Our best approximation $\dSDb{L^{(n)}}(\eps)$ (labeled ``SP-MSD1'' in the figure) approximates $\delta_{L^{(n)}}$ with a relative error that is orders of magnitude better than all the baselines for most values of $\epsilon$.\footnote{We note the dips in the relative error plot are due to the approximation oscillating between being below or above the composition curve, i.e., the relative error crosses 0 a few times.}

For Figure $\ref{fig:dve}$(b), the relative error is given by $|1-\eps_{\text{acc}}/\eps_{\text{true}}|$, where $\eps_{\text{acc}}$ is from a given accountant, and $\eps_{\text{true}}$ is from ``Truth.'' Both values of $\eps$ are computed using binary search. This figure shows that when DP-SGD has been run for more than 1 epoch, our SPA achieves a relative error of $<1\%$. Moreover, our most accurate approximation---which is obtainable from $\dSDb{L^{(n)}}(\eps)$ as given by~\eqref{eq:true_saddlepoint_ncomp} (and labeled ``SP-MSD1'' in the figures)---achieves relative errors $<0.1\%$ at 3 epochs and $<0.01\%$ at 16 epochs (i.e., it is $99.99\%$ accurate in a multiplicative sense after DP-SGD has been run for at least 16 epochs).

\begin{figure}[ht]
    \centering
    \begin{minipage}{0.5\textwidth}
        \centering
        \includegraphics[width=1\textwidth]{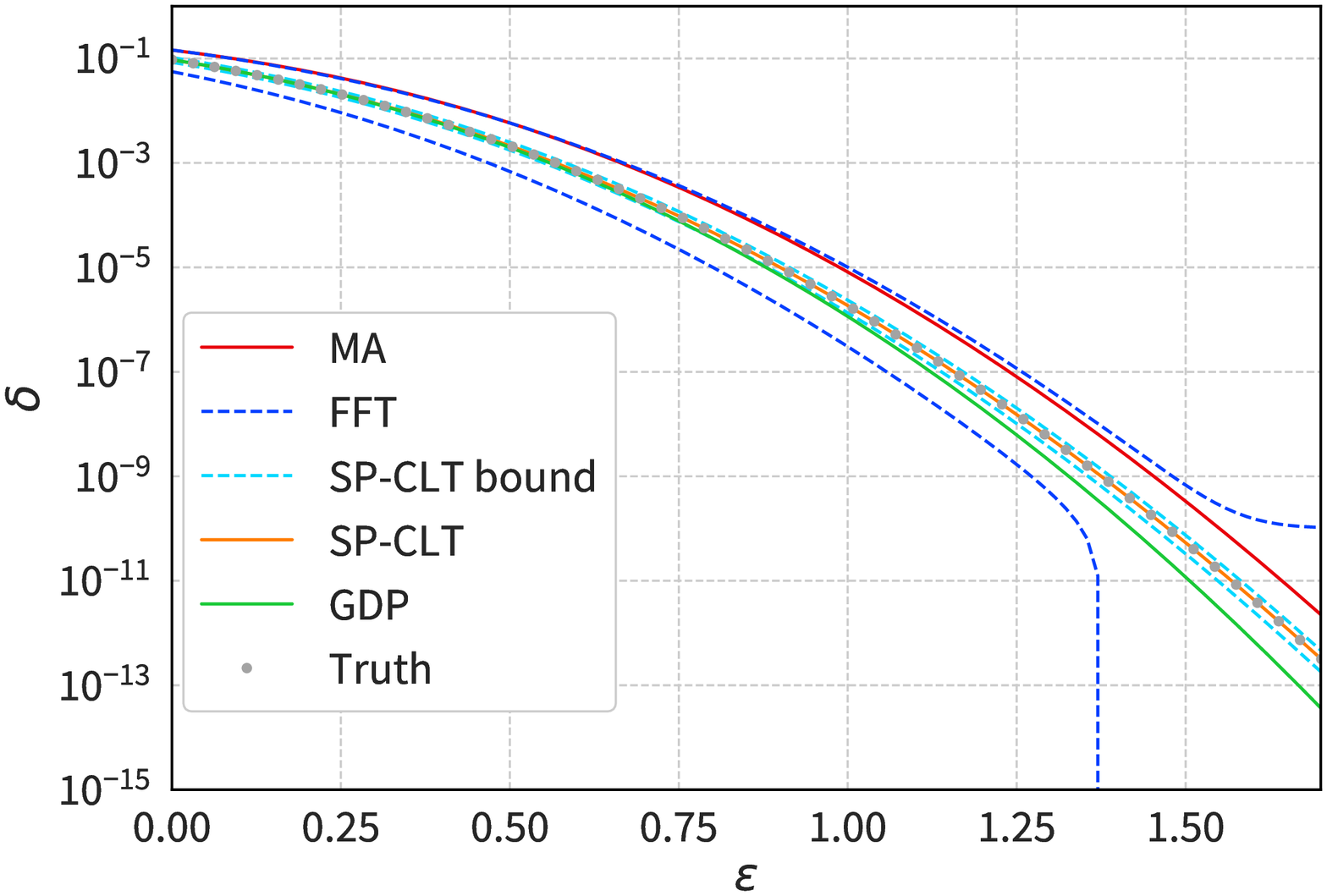} 
        \captionsetup{width=0.9\textwidth,
        justification=centering}
        \caption*{(a) }
    \end{minipage}\hfill
    \begin{minipage}{0.5\textwidth}
        \centering
        \includegraphics[width=1\textwidth]{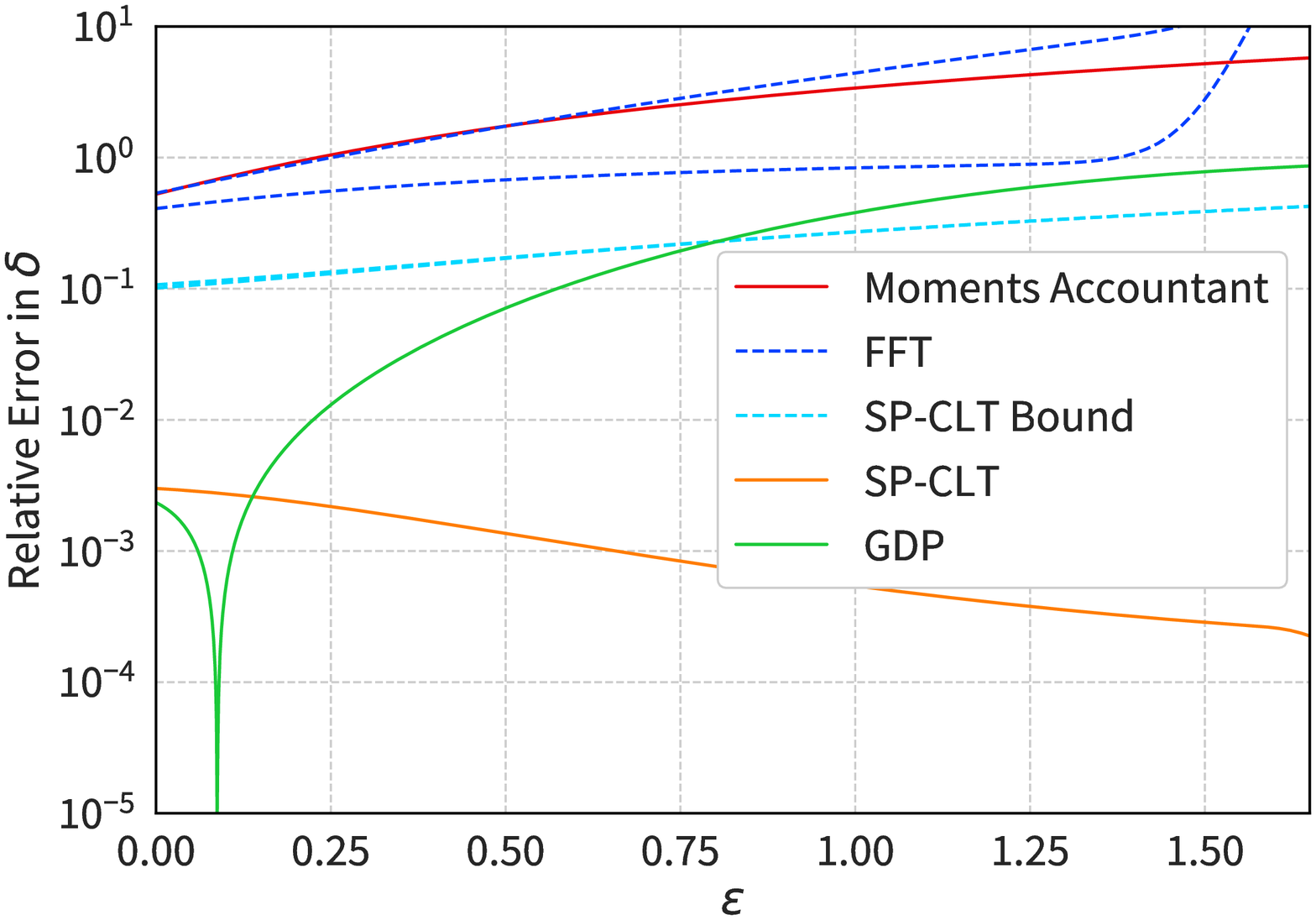} 
        \captionsetup{width=0.9\textwidth,
        justification=centering}
        \caption*{(b) }
    \end{minipage}
 
    \caption{The privacy curve of the subsampled Gaussian mechanism for $(\lambda,\sigma,n)= (10^{-2},2,2000)$. Pane (a) and (b)  highlights the tightness of our bounds on $\delta_{L^{(n)}, \,  \text{SP-CLT}}$ derived in Section \ref{sec:BE}.}
    \label{fig:SPA_bounds}
\end{figure}

\paragraph{Experiment 3: Bounds.} \label{exp3} We approximate the privacy curve $\eps \mapsto \delta_{L^{(n)}}(\eps)$ like in Figure \ref{fig:DP-SGD-exp}(a). Our goal this time is to illustrate the bounds derived for $\delta_{L^{(n)}, \,  \text{SP-CLT}}$ in Section \ref{sec:BE} and investigate how they compare to the baselines. Specifically, we set $(\lambda,\sigma,n)= (10^{-2},2,2000)$ and plot our upper and lower bounds on $\delta_{L^{(n)}, \, \text{SP-CLT}}(\eps)$ as a function of $\epsilon$. Our bounds are
$
    \delta_{L^{(n)}, \, \text{SP-CLT}}(\eps) \pm \text{err}_\text{SP}(\epsilon).
$
Although the error bounds are tight in the regime of Figure~\ref{fig:SPA_bounds}, there are parameter regimes (such as those in Figure  \ref{fig:dve}) where we have observed that the bounds underestimate the quality of the {\sdp} approximations. However, as shown in Figure~\ref{fig:dve}, the {\sdp} approximations are significantly more accurate than our error bounds suggest.

\section{Conclusion}\label{sec:conclusions}

We have introduced the saddle-point accountant (SPA) for DP. Via exponentially tilting the PLRV---and choosing the tilt in accordance with the saddle-point method from statistics---SPA combines the desirable behaviors of both large-deviation methods and central limit theorem based approaches. Two consequences follow: SPA outperforms both of the aforementioned methods, and it maintains the independence of the runtime in the setting of $n$-fold composition. We have demonstrated through numerical experiments that SPA shows comparable performance to state-of-the-art DP accountants.

\appendix

\section{Proof of Lemma~\ref{lem:subsampling}} \label{app:subsampling}

The case $\lambda=0$ is clear, so assume $\lambda \in (0,1]$. Suppose for now that $\gamma\cdot (1-\lambda)<1$. Denote $R \define T_sP$, and consider the function $G:(0,\infty)\to [0,\infty)$ defined by
\begin{equation}
    G(t) \define t \cdot \mathsf{E}_{1+\frac{\gamma-1}{t}}(P\| R).
\end{equation}
Since $\gamma'\mapsto \mathsf{E}_{\gamma'}(P\|R)$ is monotonically decreasing, we have that $G$ is monotonically increasing. Note that $0<\gamma \lambda + 1-\gamma \le \lambda$. Thus, plugging $t\in \{\gamma \lambda + 1-\gamma, \lambda\}$ into $G$, we obtain
\begin{equation} \label{eq:E}
    (\gamma \lambda + 1-\gamma)\cdot \mathsf{E}_{\frac{\gamma \lambda}{\gamma \lambda + 1 - \gamma}}(P \| R) \le \lambda \cdot \mathsf{E}_{\frac{\lambda-(1-\gamma)}{\lambda}}(P\|R).
\end{equation}
Now, note that
\begin{align}
    (\gamma \lambda + 1-\gamma)\cdot \mathsf{E}_{\frac{\gamma \lambda}{\gamma \lambda + 1 - \gamma}}(P \| R) &= (\gamma \lambda + 1-\gamma)\cdot \sup_{\mathcal{A}} P(\mathcal{A}) - \frac{\gamma \lambda}{\gamma \lambda + 1 - \gamma}\cdot R(\mathcal{A}) \\
    &= \sup_{\mathcal{A}} P(\mathcal{A}) -\gamma \cdot \left( (1-\lambda)P(\mathcal{A}) + \lambda R(\mathcal{A}) \right) \\
    &= \mathsf{E}_{\gamma}(P\|Q),
\end{align}
where the suprema are taken over all Borel sets $\calA \subset \BR^n$. In addition, by symmetry of $P$ around the origin, we have that
\begin{align}
    \mathsf{E}_{\gamma'}(P\|R) &= \sup_{\mathcal{A}} P(\mathcal{A}) - \gamma' P(\mathcal{A}-s) \\
    &= \sup_{\mathcal{A}} P(-\mathcal{A}) - \gamma' P(-\mathcal{A}-s) \\
    &= \sup_{\mathcal{A}} P(\mathcal{A}) - \gamma' P(\mathcal{A}+s) \\
    &= \sup_{\mathcal{A}} P(\mathcal{A}-s) - \gamma' P(\mathcal{A}) \\
    &= \mathsf{E}_{\gamma'}(R\|P).
\end{align}
Therefore,
\begin{align}
    \lambda \cdot \mathsf{E}_{\frac{\lambda-(1-\gamma)}{\lambda}}(P\|R) &= \lambda \cdot \mathsf{E}_{\frac{\lambda-(1-\gamma)}{\lambda}}(R\|P) \\
    &=\lambda \cdot \sup_{\mathcal{A}} R(\mathcal{A}) - \frac{\lambda-(1-\gamma)}{\lambda} \cdot P(\mathcal{A}) \\
    &= \sup_{\mathcal{A}} \ ((1-\lambda)P(\mathcal{A}) + \lambda R(\mathcal{A})) - \gamma P(\mathcal{A}) \\
    &= \mathsf{E}_\gamma(Q\|P).
\end{align}
We conclude from~\eqref{eq:E} the desired inequality $\mathsf{E}_\gamma(P\|Q) \le \mathsf{E}_{\gamma}(Q\|P)$. In addition, the case $\gamma \cdot (1-\lambda) \ge 1$ follows immediately since then $\mathsf{E}_\gamma(P\|Q) =0\le \mathsf{E}_{\gamma}(Q\|P)$.

\section{Proof of Theorem~\ref{thm:BE}} \label{app:BE}

Fix a tilting parameter $t>0$ such that $\bbE[e^{tL_j}]<\infty$ for all $j$. Recall from \eqref{eq:MAvar} that
\begin{equation}
       \delta_{L^{(n)}}(\epsilon) = e^{K_{L^{(n)}}(t)-\epsilon t}~ \bbE\left[\bar{f}\left(\tilL^{(n)} -\epsilon ,t\right) \right]
\end{equation}
where $\tilL^{(n)}$ is the exponential tilting of $L^{(n)}$ with parameter $t$, and
\begin{equation}
    \bar{f}(x,t)=e^{-xt}(1-e^{-x})^+
\end{equation}
Note that $\tilL^{(n)}=\tilL_1+\cdots+\tilL_n$. Moreover $K'_{L^{(n)}}(t)=\bbE[\tilL^{(n)}]$ and $K''_{L^{(n)}}(t)=\var[\tilL^{(n)}]$. We consider the function $\barf(x,t)$. Fig.~\ref{fig:crudeapprox} illustrates that, for fixed $t$, $\barf(x,t)$ is a unimodal function with a maximal value of $t^t/(t+1)^{t+1}$. This fact was used in the proof of \eqref{eq:min_1}; here we prove it formally. Certainly $\barf(x,t)\ge 0$ for all $x$. For $x>0$ the derivative (with respect to $x$) is
\be
\barf'(x,t)=-t e^{-tx}(1-e^{-x})+e^{-tx}e^{-x} =e^{-tx}\left[-t+(t+1)e^{-x}\right].
\ee
Note that $-t+(t+1)e^{-x}$ is monotonically decreasing in $x$, which means that $\barf(x,t)$ is increasing until $-t+(t+1)e^{-x}=0$, and is subsequently decreasing. In particular, the maximal value of $\barf$ is attained when
\be
x=x_0=-\log \frac{t}{t+1}.
\ee
Note that $x_0>0$. Thus, the maximal value of $\barf$ is
\begin{align}
f_{\max}\define \barf(x_0,t)=
\barf\left(-\log \frac{t}{t+1},t\right)
&=\left( \frac{t}{t+1}\right)^t \left(1-\frac{t}{t+1}\right) = \frac{t^t}{(t+1)^{t+1}}.
\label{eq:fmax}
\end{align}
Thus, between $x=0$ and $x=x_0$, $\barf(x,t)$ is monotonically increasing from $0$ to $f_{\max}$; then from $x=x_0$ to $x=\infty$, $\barf(x,t)$ is monotonically decreasing from $f_{\max}$ to $0$. Thus, there exist functions $f_1^{-1}(z)$, $f_2^{-1}(z)$ such that, for any $z\in(0,f_{\max})$, $\barf(x,t)>z$ if and only if
\[
f_1^{-1}(z)<x<f_2^{-1}(z).
\]
Therefore,
\be
\bbE[ \barf(\tilL^{(n)}-\eps,t)]
=\int_0^{f_{\max}} \bbP\left[\barf(\tilL^{(n)}-\veps,t)>z\right]dz
=\int_0^{f_{\max}} \bbP\left[f_1^{-1}(z)<\tilL^{(n)}-\veps<f_2^{-1}(z)\right]dz.
\ee
Recall that $\tilL^{(n)}=\sum_{j=1}^n \tilL_j$ where the $\tilL_j$ are independent, and note that $\bbE[\tilL_j]=K_{L_j}'(t)$ and $\var[\tilL_j]=K_{L_j}''(t)$. Thus we may apply the Berry-Esseen theorem to write
\be
\sup_{x\in \BR} \left| \bbP\left[\tilL^{(n)}>x\right] -  \bbP[Z>x] \right| \le \frac{0.56 \ \uP_t^{(n)}}{K_{L^{(n)}}''(t)^{3/2}}
\ee
where $Z\sim\calN\left( K_{L^{(n)}}'(t),K_{L^{(n)}}''(t)\right)$ and $\uP_t^{(n)}$ is defined in \eqref{eq:Pt_def}. Thus we have the upper bound
\begin{align}
\delta_{L^{(n)}}(\veps) &= e^{K_{L^{(n)}}(t)-\veps t} \  \bbE \left[\barf\left(\tilL^{(n)}-\veps,t\right)\right] \\
&= e^{K_{L^{(n)}}(t)-\veps t} \int_0^{f_{\max}} \bbP\left[f_1^{-1}(z)<\tilL^{(n)}-\veps<f_2^{-1}(z)\right]dz
\\
&\le e^{K_{L^{(n)}}(t)-\veps t} \ \left(\int_0^{f_{\max}} \BP\left[f_1^{-1}(z)<Z-\veps<f_2^{-1}(z)\right]dz+\frac{1.12f_{\max}\uP_t^{(n)}}{K_{L^{(n)}}''(t)^{3/2}}\right)
\\
&=e^{K_{L^{(n)}}(t)-\veps t} \ \left(\bbE \left[\barf\left(Z-\veps,t\right)\right]+\frac{1.12f_{\max}\uP_t^{(n)}}{K_{L^{(n)}}''(t)^{3/2}}\right) \label{eq:BE upper}
\end{align}
Similarly, we have the lower bound
\begin{align}
\delta_{L^{(n)}}(\veps) \ge e^{K_{L^{(n)}}(t)-\veps t} \ \left(\bbE \left[\barf\left(Z-\veps,t\right)\right]-\frac{1.12f_{\max}\uP_t^{(n)}}{K_{L^{(n)}}''(t)^{3/2}}\right). \label{eq:BE lower}
\end{align}

\section{Proof of Proposition~\ref{prop:gaussian_computation}}\label{app:gaussian_computation}

Denote $K=K_{L^{(n)}}$ for short. The Gaussian expectation may be computed as
\begin{align}
    &\bbE \left[\barf\left(Z-\veps,t\right)\right] = \exp\left(\frac{K''(t)t^2}{2}-(K'(t)-\eps)t\right) Q\left(\sqrt{K''(t)}\,t-\frac{K'(t)-\eps}{\sqrt{K''(t)}}\right)\nonumber \\
    &\qquad -\exp\left(\frac{K''(t)(t+1)^2}{2}-(K'(t)-\eps)(t+1)\right) Q\left(\sqrt{K''(t)}\,(t+1)-\frac{K'(t)-\eps}{\sqrt{K''(t)}}\right).
\end{align}
Using $Q(z) = \frac{q(z)}{\sqrt{2\pi}} e^{-z^2/2}$ and the definitions of $\alpha,\beta,\gamma$, we get
\begin{equation}
    \bbE \left[\barf\left(Z-\veps,t\right)\right] = \frac{q(\alpha) - q(\beta)}{\sqrt{2\pi}} \ e^{-\gamma^2/2}.
\end{equation}
Plugging this into the definition of $\delta_{L^{(n)},\text{SP-CLT}}$ completes the proof.

\section{Proof of Theorem~\ref{thm:err}} \label{app:err}

We write $K=K_{L^{(n)}}$, $L=L^{(n)}$, and $\uP_t = \uP_t^{(n)}$, for short. Recall the definition of the error term in~\eqref{eq:berry_esseen_delta_error}
\begin{align}
    \text{err}_{\text{SP}}(\eps)&
    =
    \exp\left(K(t_0)-\eps t_0\right)\frac{t_0^{t_0}}{(1+t_0)^{1+t_0}}\cdot \frac{1.12\, \uP_{t_0}}{K''(t_0)^{3/2}}.
\end{align}
From the characterization of the saddle point in Proposition~\ref{lem:t}, we have that
\begin{equation}
    t_0 \sim \frac{b+\sqrt{b^2+4}}{2\sigma_L}.
\end{equation}
By Assumption~\ref{assumption:KL V P}, we have that $\sigma_L^2 = K''(0) \sim n\V$ as $n\to \infty$. Hence, $t_0 \sim c/\sqrt{n}$ for $c=(b+\sqrt{b^2+4})/(2\V) = o(n^{1/6})$. Thus, by Assumption~\ref{assumption:KL V P} again, $(K''(t_0),\uP_{t_0})\sim n \cdot ( \V,\uP)$. As we also have that $t_0 \to 0$, we conclude that 
\begin{equation}
    \frac{t_0^{t_0}}{(1+t_0)^{1+t_0}}\cdot \frac{1.12\, \uP_{t_0}}{K''(t_0)^{3/2}} \sim \frac{1.12 \ \uP}{\V^{3/2} \cdot \sqrt{n}}.
\end{equation}
Thus, it only remains to analyze the asymptotic of $\exp\left( K(t_0)-\veps t_0 \right)$.

We use the following Taylor expansion of $K$ around $0$:
\begin{equation}
    K(t_0) = t_0 \cdot \BE[L] + \frac{t_0^2}{2} \cdot \sigma_L^2 + \frac{t_0^3}{6} \cdot K'''(\xi),
\end{equation}
where $0\le \xi \le t_0$. Using $\veps = \BE[L] + b \sigma_L$, and writing $t_0 = d_0/\sigma_L$ (so $d_0 \sim  (b+\sqrt{b^2+4})/2$ by Proposition~\ref{lem:t}), we obtain
\begin{equation}
    K(t_0) - \veps t_0 = \frac{d_0^2}{2} - b d_0 + \frac{d_0^3K'''(\xi)}{6\sigma_L^3}.
\end{equation}

Now, note that $K'''(\xi) = \sum_{j=1}^n K_j'''(\xi)$. Thus, applying the triangle inequality, we obtain that $|K'''(\xi)|\le \uP_{\xi}$. As $0\le \xi \le t_0$, Assumption~\ref{assumption:KL V P} yields that $|K'''(\xi)| = O(n)$. As $\sigma_L = \Theta(\sqrt{n})$, and $d_0 = o(n^{1/6})$, we infer that
\begin{equation}
    \frac{d_0^3K'''(\xi)}{6\sigma_L^3} \to 0
\end{equation}
as $n\to \infty$. Hence, 
\begin{equation}
    \exp\left(K(t_0)-\eps t_0\right) \sim \exp\left( \frac{d_0^2}{2} - bd_0 \right).
\end{equation}
Writing $d_0 = \tau_0 \cdot (b+\sqrt{b^2+4})/2$, so $\tau_0>0$ and $\tau_0 \to 1$ by Proposition~\ref{lem:t}, then collecting terms, we obtain
\begin{equation}
    \frac{d_0^2}{2} - bd_0 = \frac{\tau_0^2}{2} -  (2-\tau_0)\tau_0 \cdot \frac{b^2+b\sqrt{b^4+4}}{4}.
\end{equation}
Therefore, we obtain that
\begin{equation}
     \exp\left(K(t_0)-\eps t_0\right) \sim \frac{\sqrt{e}}{C(b)^{\tau}}
\end{equation}
where $\tau := (2-\tau_0)\tau_0 \to 1$. Putting the asymptotics shown above together, we conclude that
\begin{equation}
    \errt(\veps) \sim \frac{1.12\sqrt{e} \ \uP }{\V^{3/2} \cdot C(b)^\tau \cdot \sqrt{n}},
\end{equation}
as desired.

\section{Proof of Proposition~\ref{lem:t}: Asymptotic of the Saddle Point} \label{app:t}

We write $K=K_{L^{(n)}}$ and $L=L^{(n)}$ for short. Consider the saddle-point equation~\eqref{eq:tau0}:
\begin{equation}
    K'(t) = \veps + \frac{1}{t} + \frac{1}{1+t}.
\end{equation}
The left-hand side strictly increases from $\BE[L]$ to $\esup L$ over $t\in [0,\infty)$, whereas the right-hand side strictly decreases from $\infty$ to $\veps$ over the same interval. Hence, there exists a unique solution $t=t_0>0$, which we call the saddle point.

We show first that $t_0 \to 0$ as $n\to \infty$. Suppose, for the sake of contradiction, that $t^* \define \limsup_{n\to \infty} t_0 > 0$, and let $n_k\nearrow \infty$ be a sequence of indices such that the sequence of the $n_k$-th saddle points, denoted $t_0^{(k)}$, converge to $t^*$. Let $\rho_2:(0,\infty) \to (0,\infty)$ be defined by $\rho_2(t) \define (K'(t)-\BE[L])/(t\sigma_L^2)$, so $\rho_2(t) \to 1$ as $t\to 0^+$ and
\begin{equation}
    K'(t) = \BE[L] + \sigma_L^2 t \rho_2(t).
\end{equation}
Note that $\rho_2$ is a continuous function. Noting that $\veps = \BE[L] + b\sigma_L$, rearranging the saddle-point equation yields that
\begin{equation} \label{ra}
    \frac{1+ \frac{\sigma_L^2}{\BE[L]} t\rho_2(t)}{1+ b \frac{\sigma_L}{\BE[L]}} = 1 + \frac{1}{\veps t} + \frac{1}{\veps \cdot (1+t)}.
\end{equation}
Taking $t\in \{t_0^{(k)}\}_{k\in \BN}$, letting $k\to \infty$, and recalling the assumptions that $(\BE[L],\sigma_L^2)\sim n\cdot (\KL,\V)$ for $\KL,\V>0$ and that $b=o(\sqrt{n})$, we infer from~\eqref{ra} that
\begin{equation} \label{rb}
    \frac{\V t^* \rho_2(t^*)}{\KL} = 0.
\end{equation}
Equality~\eqref{rb} contradicts that $\V,t^*,\rho_2(t^*),\KL>0$. Thus, we must have that $t^*=0$.

Consider the reparametrization $t=d/\sigma_L$, so $d$ is a variable over $(0,\infty)$. The saddle-point equation can be rewritten as
\begin{equation} \label{eq:SP quadratic}
    \left( \rho_2(t) - \frac{b}{\sigma_L} \right) d^2 - \left( b + \frac{2}{\sigma_L} \right) d - \left( 1- \frac{\rho_2(t)d^3}{\sigma_L} \right) = 0.
\end{equation}
We rewrite the saddle-point equation in this ``quadratic'' form since it closely approximates the quadratic $d^2-bd-1=0$ at the saddle-point. Indeed, let $d_0>0$ be such that $t_0=d_0/\sigma_L$. We obtain from~\eqref{eq:SP quadratic} the inequality $\frac12 d_0^2 - (b+1)d_0 -1 \le 0$ for all large $n$. This latter inequality yields that
\begin{equation}
    d_0 \le b+1 + \sqrt{(b+1)^2+2} = o(n^{1/6}).
\end{equation}
Hence, $\rho_2(t_0)d_0^3/\sigma_L \to 0$ as $n\to \infty$, i.e., the ``constant'' term in~\eqref{eq:SP quadratic} approaches $1$. Thus, for all large $n$, completing the square in~\eqref{eq:SP quadratic} yields
\begin{equation} \label{eq:d}
    d_0 = \frac{b + \frac{2}{\sigma_L} + \sqrt{\left( b+ \frac{2}{\sigma_L} \right)^2 + 4 \left( 1 - \frac{\rho_2(t_0)d_0^3}{\sigma_L} \right) \left( \rho_2(t_0) - \frac{b}{\sigma_L} \right)}}{2\left( \rho_2(t_0) - \frac{b}{\sigma_L} \right)}.
\end{equation}
Taking $n\to \infty$, we obtain
\begin{equation}
    d_0 \sim  \frac{b+\sqrt{b^2+4}}{2},
\end{equation}
which gives the desired asymptotic formula for the saddle-point $t_0=d_0/\sigma_L$.

\bibliography{main}

\newcommand{\etalchar}[1]{$^{#1}$}
\begin{thebibliography}{GKKM22}

\bibitem[ACG{\etalchar{+}}16]{Abadi_MomentAccountant}
Martin Abadi, Andy Chu, Ian Goodfellow, H.~Brendan McMahan, Ilya Mironov, Kunal
  Talwar, and Li~Zhang.
\newblock Deep learning with differential privacy.
\newblock In {\em Proc. ACM SIGSAC CCS}, pages 308--318, 2016.

\bibitem[ALC{\etalchar{+}}20]{asoodeh2020better}
Shahab Asoodeh, Jiachun Liao, Flavio~P. Calmon, Oliver Kosut, and Lalitha
  Sankar.
\newblock A better bound gives a hundred rounds: Enhanced privacy guarantees
  via $f$-divergences.
\newblock In {\em Proc. of Int. Symp. Inf Theory (ISIT)}, 2020.

\bibitem[BBG18]{Balle:Subsampling}
B.~Balle, G.~Barthe, and M.~Gaboardi.
\newblock Privacy amplification by subsampling: Tight analyses via couplings
  and divergences.
\newblock In {\em NeurIPS}, pages 6280--6290, 2018.

\bibitem[BBG{\etalchar{+}}20]{Balle2019HypothesisTI}
Borja Balle, Gilles Barthe, Marco Gaboardi, Justin Hsu, and Tetsuya Sato.
\newblock Hypothesis testing interpretations and renyi differential privacy.
\newblock In {\em Proceedings of the Twenty Third International Conference on
  Artificial Intelligence and Statistics}, pages 2496--2506, 2020.

\bibitem[BO13]{Barthe:2013_Beyond_DP}
G.~Barthe and F.~Olmedo.
\newblock Beyond differential privacy: Composition theorems and relational
  logic for $f$-divergences between probabilistic programs.
\newblock In {\em Proc. ICALP}, pages 49--60, 2013.

\bibitem[BW18]{Improving_Gaussian}
Borja Balle and Yu-Xiang Wang.
\newblock Improving the {G}aussian mechanism for differential privacy:
  Analytical calibration and optimal denoising.
\newblock In {\em ICML}, volume~80, pages 394--403, 10--15 July 2018.

\bibitem[CKS20]{canonne2020discrete}
Cl{\'e}ment~L Canonne, Gautam Kamath, and Thomas Steinke.
\newblock The discrete gaussian for differential privacy.
\newblock {\em Advances in Neural Information Processing Systems},
  33:15676--15688, 2020.

\bibitem[Dan87]{daniels_tail_1987}
H.~E. Daniels.
\newblock Tail {Probability} {Approximations}.
\newblock {\em International Statistical Review / Revue Internationale de
  Statistique}, 55(1):37--48, 1987.
\newblock Publisher: [Wiley, International Statistical Institute (ISI)].

\bibitem[DGK{\etalchar{+}}22]{ConnectingDots}
Vadym Doroshenko, Badih Ghazi, Pritish Kamath, Ravi Kumar, and Pasin
  Manurangsi.
\newblock Connect the dots: Tighter discrete approximations of privacy loss
  distributions.
\newblock 2022.

\bibitem[DKM{\etalchar{+}}06]{Dwork-OurData}
Cynthia Dwork, Krishnaram Kenthapadi, Frank McSherry, Ilya Mironov, and Moni
  Naor.
\newblock Our data, ourselves: Privacy via distributed noise generation.
\newblock In Serge Vaudenay, editor, {\em EUROCRYPT}, pages 486--503, 2006.

\bibitem[DMNS06]{Dwork_Calibration}
Cynthia Dwork, Frank McSherry, Kobbi Nissim, and Adam Smith.
\newblock Calibrating noise to sensitivity in private data analysis.
\newblock In {\em Proc. Theory of Cryptography (TCC)}, pages 265--284, Berlin,
  Heidelberg, 2006.

\bibitem[DR14]{Dwork_Roth_book}
Cynthia Dwork and Aaron Roth.
\newblock The algorithmic foundations of differential privacy.
\newblock 2014.

\bibitem[DRS19]{GaussianDP}
Jinshuo Dong, Aaron Roth, and Weijie~J. Su.
\newblock Gaussian differential privacy.
\newblock {\em CoRR}, abs/1905.02383, 2019.

\bibitem[DRV10]{dwork2010boosting}
Cynthia Dwork, Guy~N Rothblum, and Salil Vadhan.
\newblock Boosting and differential privacy.
\newblock In {\em 2010 IEEE 51st Annual Symposium on Foundations of Computer
  Science}, pages 51--60. IEEE, 2010.

\bibitem[Fie90]{small_Sample}
Christopher (Christopher~A.) Field.
\newblock {\em Small sample asymptotics}.
\newblock Lecture notes-monograph series ; v. 13. Institute of Mathematical
  Statistics, Hayward, Calif, 1990.

\bibitem[GKKM22]{FasterPrivacyAccountant}
Badih Ghazi, Pritish Kamath, Ravi Kumar, and Pasin Manurangsi.
\newblock Faster privacy accounting via evolving discretization.
\newblock In Kamalika Chaudhuri, Stefanie Jegelka, Le~Song, Csaba Szepesvari,
  Gang Niu, and Sivan Sabato, editors, {\em Proceedings of the 39th
  International Conference on Machine Learning}, volume 162 of {\em Proceedings
  of Machine Learning Research}, pages 7470--7483. PMLR, 17--23 Jul 2022.

\bibitem[GLW21]{Numerical_compositionDP}
Sivakanth Gopi, Yin~Tat Lee, and Lukas Wutschitz.
\newblock Numerical composition of differential privacy.
\newblock In {\em Advances in Neural Information Processing Systems (NeurIPS)},
  2021.

\bibitem[GSC17]{Subsampling_renyi}
Joseph Geumlek, Shuang Song, and Kamalika Chaudhuri.
\newblock R\'enyi differential privacy mechanisms for posterior sampling.
\newblock In {\em Advances in Neural Information Processing Systems},
  volume~30, 2017.

\bibitem[Hal13]{hall2013bootstrap}
Peter Hall.
\newblock {\em The bootstrap and Edgeworth expansion}.
\newblock Springer Science \& Business Media, 2013.

\bibitem[HR09]{Huber_book}
Peter~J. Huber and Elvezio~M. Ronchetti.
\newblock {\em Robust Statistics, Second Edition}.
\newblock Wiley, 2009.

\bibitem[J{\etalchar{+}}21]{mpmath}
Fredrik Johansson et~al.
\newblock {\em mpmath: a {P}ython library for arbitrary-precision
  floating-point arithmetic (version 1.2.0)}, February 2021.
\newblock {\tt http://mpmath.org/}.

\bibitem[JJ99]{mathematical_physics_jeffrey}
Harold Jeffreys and Bertha Jeffreys.
\newblock {\em Methods of Mathematical Physics}.
\newblock Cambridge University Press, 3rd edition, 1999.

\bibitem[KH21]{koskela2021computing}
Antti Koskela and Antti Honkela.
\newblock Computing differential privacy guarantees for heterogeneous
  compositions using fft.
\newblock {\em CoRR}, abs/2102.12412, 2021.

\bibitem[KJH20]{koskela2020computing}
Antti Koskela, Joonas J{\"a}lk{\"o}, and Antti Honkela.
\newblock Computing tight differential privacy guarantees using fft.
\newblock In {\em International Conference on Artificial Intelligence and
  Statistics}, pages 2560--2569. PMLR, 2020.

\bibitem[KJPH21]{koskela21a_FFT}
Antti Koskela, Joonas J{\"a}lk{\"o}, Lukas Prediger, and Antti Honkela.
\newblock Tight differential privacy for discrete-valued mechanisms and for the
  subsampled gaussian mechanism using fft.
\newblock In {\em International Conference on Artificial Intelligence and
  Statistics}, pages 3358--3366. PMLR, 2021.

\bibitem[KLN{\etalchar{+}}11]{Shiva_subsampling}
Shiva~Prasad Kasiviswanathan, Homin~K. Lee, Kobbi Nissim, Sofya Raskhodnikova,
  and Adam Smith.
\newblock What can we learn privately?
\newblock {\em SIAM J. Comput.}, 40(3):793--826, June 2011.

\bibitem[Kol06]{kolassa_series_2006}
John~E. Kolassa.
\newblock {\em Series approximation methods in statistics}, volume~88.
\newblock Springer Science \& Business Media, 2006.

\bibitem[KOV15]{kairouz_Composition}
Peter Kairouz, Sewoong Oh, and Pramod Viswanath.
\newblock The composition theorem for differential privacy.
\newblock In Francis Bach and David Blei, editors, {\em Proceedings of the 32nd
  International Conference on Machine Learning}, volume~37, pages 1376--1385,
  2015.

\bibitem[KOV17]{Kairouz_Composition_TIT}
P.~{Kairouz}, S.~{Oh}, and P.~{Viswanath}.
\newblock The composition theorem for differential privacy.
\newblock {\em IEEE Transaction on Information Theory}, 63(6):4037--4049, June
  2017.

\bibitem[Kri09]{CIFAR10}
Alex Krizhevsky.
\newblock Cifar-10 dataset.
\newblock \url{https://www.cs.toronto.edu/~kriz/cifar.html}, 2009.

\bibitem[LBBH98]{LeCun}
Y.~Lecun, L.~Bottou, Y.~Bengio, and P.~Haffner.
\newblock Gradient-based learning applied to document recognition.
\newblock {\em Proceedings of the IEEE}, 86(11):2278--2324, 1998.

\bibitem[LCV17]{E_gamma}
J.~{Liu}, P.~{Cuff}, and S.~{Verd\'u}.
\newblock ${E_{{\gamma }}}$-resolvability.
\newblock {\em IEEE Transaction on Information Theory}, 63(5):2629--2658, May
  2017.

\bibitem[Mir17]{RenyiDP}
Ilya Mironov.
\newblock R\'enyi differential privacy.
\newblock In {\em Proc. Computer Security Found. (CSF)}, pages 263--275, 2017.

\bibitem[Moh21]{PRV_break}
Shubhankar Mohapatra.
\newblock {PRV} accountant breaks for small delta values.
\newblock \url{https://github.com/microsoft/prv_accountant/issues/17}, 2021.
\newblock [Online; accessed 04-Aug-2021].

\bibitem[MTZ19]{mironov2019r}
Ilya Mironov, Kunal Talwar, and Li~Zhang.
\newblock R\'enyi differential privacy of the sampled gaussian mechanism.
\newblock {\em arXiv preprint arXiv:1908.10530}, 2019.

\bibitem[MV16]{Vadhan_Murtagh}
Jack Murtagh and Salil Vadhan.
\newblock The complexity of computing the optimal composition of differential
  privacy.
\newblock In {\em Proc. Int. Conf. Theory of Cryptography}, pages 157--175,
  2016.

\bibitem[NIS]{NIST}
{\it NIST Digital Library of Mathematical Functions}.
\newblock http://dlmf.nist.gov/, Release 1.1.6 of 2022-06-30.
\newblock F.~W.~J. Olver, A.~B. {Olde Daalhuis}, D.~W. Lozier, B.~I. Schneider,
  R.~F. Boisvert, C.~W. Clark, B.~R. Miller, B.~V. Saunders, H.~S. Cohl, and
  M.~A. McClain, eds.

\bibitem[PPV10]{polyanskiy2010channel}
Yury Polyanskiy, H~Vincent Poor, and Sergio Verd\'u.
\newblock Channel coding rate in the finite blocklength regime.
\newblock {\em {IEEE} Transaction on Information Theory}, 56(5):2307--2359,
  2010.

\bibitem[Rei88]{reid_saddlepoint_1988}
N.~Reid.
\newblock Saddlepoint {Methods} and {Statistical} {Inference}.
\newblock {\em Statistical Science}, 3:213--227, May 1988.
\newblock Publisher: Institute of Mathematical Statistics.

\bibitem[SMM19]{sommer2019privacy}
David~M Sommer, Sebastian Meiser, and Esfandiar Mohammadi.
\newblock Privacy loss classes: The central limit theorem in differential
  privacy.
\newblock {\em Proceedings on Privacy Enhancing Technologies},
  2019(2):245--269, 2019.

\bibitem[WGZ{\etalchar{+}}22]{Wang2022Edgeworth}
Hua Wang, Sheng Gao, Huanyu Zhang, Milan Shen, and Weijie~J. Su.
\newblock Analytical composition of differential privacy via the edgeworth
  accountant.
\newblock {\em arXiv preprint arXiv:2206.04236}, 2022.

\bibitem[ZDW22]{zhu2022optimal}
Yuqing Zhu, Jinshuo Dong, and Yu-Xiang Wang.
\newblock Optimal accounting of differential privacy via characteristic
  function.
\newblock In {\em International Conference on Artificial Intelligence and
  Statistics}, pages 4782--4817. PMLR, 2022.

\bibitem[ZW19]{Poisson_Subsample_RDP}
Yuqing Zhu and Yu-Xiang Wang.
\newblock Poission subsampled r\'enyi differential privacy.
\newblock In Kamalika Chaudhuri and Ruslan Salakhutdinov, editors, {\em ICML},
  volume~97, pages 7634--7642, 09--15 Jun 2019.

\end{thebibliography}
\bibliographystyle{alpha}

\end{document}